\DeclareMathOperator*{\argmax}{arg\,max}
\DeclareMathOperator*{\argmin}{arg\,min}
\theoremstyle{definition}
\newtheorem{lem}{Lemma}
\newtheorem{them}{Theorem}
\newtheorem{coro}{Corollary}
\newtheorem{prop}{Proposition}
\newtheorem{ass}{Assumption}
\title{Characterizing the Feasible Payoff Set of OLG Repeated Games\thanks{We thank the two anonymous referees and the coeditor for their helpful comments and discussions. We are grateful to the seminar and conference participants of the 2023 Asian Meeting of the Econometric Society (Beijing), the 2023 Africa Meeting of the Econometric Society (Nairobi), and the CIRJE Microeconomics Workshop at the University of Tokyo. We also thank Michihiro Kandori and Akihiko Matsui for their helpful suggestions. Chihiro Morooka was supported by the Japan Society for the Promotion of Science (JSPS) KAKENHI Grant Number JP22K13360. All remaining errors are ours.}}
\author{Daehyun Kim\thanks{Division of Humanities and Social Sciences, POSTECH. Email: \href{mailto:dkim85@outlook.com}{dkim85@outlook.com}}\and Chihiro Morooka\thanks{School of Science and Engineering, Tokyo Denki University. Email: \href{mailto:c-morooka@mail.dendai.ac.jp}{c-morooka@mail.dendai.ac.jp}}}
\date{December 5, 2024}
\begin{document}
\maketitle
\begin{abstract}
We study the set of (stationary) feasible payoffs of overlapping generation repeated games that can be achieved by action sequences in which every generation of players plays the same sequence of action profiles. First, we completely characterize the set of feasible payoffs given any fixed discount factor of players and the length of interaction. This allows us to obtain the feasible payoff set in closed form. Second, we provide novel comparative statics of the feasible payoff set with respect to the discount factor and the length of interaction. Interestingly, the feasible payoff set becomes \emph{smaller} as players' discount factor becomes larger. Additionally, we identify a necessary and sufficient condition for this monotonicity to be strict. 
\end{abstract}

\textbf{Keywords:} overlapping generation, repeated games

\strut

\textbf{JEL Classification Numbers:} C72, C73

\onehalfspacing

\newpage

\renewcommand{\arraystretch}{1.3}

\section{Introduction}

In overlapping generation (OLG) repeated games, players play for finite periods and are replaced by their next generation. This class of games has been used to study cooperation among finitely-lived players in long-run organizations (e.g., \cite{Hammond_1975} and \cite{Cremer_1986_QJE}).

In this paper, we study the feasible payoff set of OLG repeated games. In the literature on OLG repeated games, including studies on folk theorems, the convex hull of the stage game payoffs is typically considered the feasible payoff set of interest. However, the overlapping structure of these games allows players to achieve average discounted payoffs beyond the convex hull of the static payoffs: Although players share the same discount factor, they discount payoffs differently depending on their position in the lifecycle. Thus, it is not obvious which payoffs are feasible. Our purpose in this study is to understand how the OLG structure affects what players can obtain. 

 We study the feasible payoff set when players' discount factor and the period of overlap are fixed, which departs from most studies in the literature, which usually focus on the asymptotic case. Specifically, we focus on ``stationary'' feasible payoffs in which each generation of the same player plays the same sequence of actions during their lifetime.\footnote{In general, each generation of the same player plays different sequences of action profiles. In this case, each player has an infinite sequence of feasible payoffs. We discuss this further in \Cref{subsec:discuss}.}

First, we completely characterize the feasible payoff set of OLG repeated games. We find that the feasible payoff set can be characterized by the convex hull of the set of average discounted payoffs that can be achieved by playing length-$n$ sequences of action profiles, where $n$ is the number of players and each of the action profiles is supposed to be played for $T$ (the interaction length) times consecutively. For such sequences of action profiles, we can calculate the average discounted payoff \emph{as if} players play length-$n$ sequences (rather than $nT$), while effectively discounting by $\delta^T$ (rather than by $\delta$). Thus, this characterization substantially simplifies the set of action profiles we need to consider to obtain the feasible payoff set. In fact, our characterization allows a closed-form expression of the feasible payoff set of any stage game.

Second, we analyze the comparative statics of the feasible payoff set with respect to $\delta$ and $T$. We find that the feasible payoff set is decreasing  (in the set-inclusion sense) in the effective discount factor $\delta^T$. Interestingly, this implies that the set is \emph{decreasing} in $\delta$. When the effective discount factor is $1$, the feasible payoff set coincides with the convex hull of the static payoffs. When it is close to $0$, it is an $n$-dimensional cube, where for each player, the maximum (resp. minimum) feasible payoff coincides with the maximum (resp. minimum) stage game payoff. For intermediate effective discount factors, it is an $n$-dimensional polytope. Furthermore, we examine the condition under which the above monotonicity holds in a strict sense. We find that unless the one-shot feasible set is already a (multidimensional) cube, the OLG feasible set satisfies the strict monotonicity with respect to the parameters.

\textbf{Related Literature} 
	
	Previous research on OLG repeated games has mainly focused on folk-theorem-like approaches such as those in \cite{Kandori_1992_RES} and \cite{Smith_1992_GEB}. Both studies regard the convex hull of the stage game payoffs as the feasible payoff set, which is targeted by their folk theorems. Since their folk theorems first choose sufficiently large $T$ and then $\delta$ close to 1, the feasible set for such a case (as shown in the current study) is the convex hull of the stage game payoffs.\footnote{More precisely, \cite{Kandori_1992_RES} assumes $\delta =1$ throughout the paper, but mentions that the analogous result holds with sufficiently large $\delta<1$ because of the strictness of the punishments. \cite{Smith_1992_GEB} considers two types of folk theorems: non-uniform and uniform. In the non-uniform folk theorem, he first chooses sufficiently large $T$ and then chooses $\delta$. In the uniform folk theorem, he chooses $T$ and $\delta$ simultaneously.} More recently, \cite{Morooka_2021_IJGT} provides an alternative OLG folk theorem with the opposite order of choosing parameters: It shows that if $\delta$ is chosen first and then $T$ is chosen, any feasible and strictly individually rational payoffs can be achieved by subgame perfect equilibrium payoffs, where the feasible payoff set considered is larger than the convex hull of the stage game payoffs. Alternatively, in this study, we analyze OLG repeated games with fixed $\delta$ and $T$. As mentioned, we provide a complete characterization of the feasible payoff set. By studying the feasible payoff set, we provide a natural benchmark for the equilibrium payoff set, which may be of greater interest.

The present study is also related to the literature on repeated games with differential discounting of players, which has been studied since \cite{LP_1999_ECMA}. They study infinitely repeated games between two players with different discount factors and show that some payoffs outside the convex hull of the stage game payoffs can be obtained by intertemporal trading of payoffs, i.e., the more patient player concedes payoffs in early periods to obtain higher payoffs in later periods. \cite{CT_2012_GEB} generalize this analysis to the case of more than two players and provide a folk theorem. \cite{Sugaya_2015_TE} proves a folk theorem for $n$-player infinitely repeated games with imperfect public monitoring. \cite{DG_2022_JET} provide a more constructive approach to studying feasible and equilibrium payoffs of repeated games with perfect monitoring. On the other hand, \cite{Chen_2007_EL} and \cite{CF_2013_IJGT} study finitely repeated games between two players. These papers examine whether the feasible payoff set expands as the length of the game increases. The latter paper, built on the result of the former, shows that this is indeed the case for any two-player stage games.\footnote{They leave the question open for a more general case involving an arbitrary number of players.}

Compared to the literature on repeated games with differential discounting, in our model of OLG repeated games, players share the \emph{same} discount factor. Nevertheless, players can trade payoffs across periods due to the overlapping generation structure: In a given period, players are located in different positions in their lifecycle (``age''), resulting in different discounting of some future payoffs. Notice that players discount their payoffs in the same way when they are the same age. In this sense, their discounting is ``symmetric,'' which makes our analysis relatively tractable. This results in our characterization of the feasible payoff set, allowing a closed-form expression of the feasible payoff set given any stage game for each discount factor and the length of each generation's lifespan.\footnote{In the literature on repeated games with differential discounting, \cite{Chen_2007_EL} provides an explicit characterization of the feasible payoffs of finitely repeated games for a specific two-player stage game. \cite{Sugaya_2015_TE} provides a recursive characterization of the feasible payoff set of infinitely repeated games for general stage games. \cite{DG_2022_JET} provide several characterizations of the feasible payoff set; in particular, they characterize it when players can have large discount factors.}

The remainder of this paper is organized as follows. In \Cref{sec:2}, we introduce the model of OLG repeated games. In \Cref{sec:3}, we present our first main result, which is a complete characterization of the feasible payoff set of OLG repeated games. In \Cref{sec:4}, we provide the comparative statics results of the feasible payoff set with respect to $\delta$ and $T$. In \Cref{sec:5}, we provide two examples to illustrate our main results. In \Cref{sec:6}, we generalize the model and extend our main results. In \Cref{sec:7}, we have some discussions. \Cref{sec:8} concludes the paper. Omitted proofs can be found in the appendix.

\label{sec:1}

\section{Model}

\subsection{Stage Game}
A stage game is defined as a triple $G=(N,(A_i)_i,(u_i)_i)$, where $N=\{ 1,2,\cdots ,n\}$, for some $n\geq 2$, is the set of players, $A_i$ is a finite set of actions available to player $i$, and  $u_i: \prod_{i \in N} A_i \to \mathbb{R}$ is player $i$'s one-shot payoff function. Let $A \equiv \prod_{i \in N} A_i$ be the set of action profiles. 

Given a set $B \subseteq \mathbb{R}^N$, let $co (B)$ be the convex hull of $B$. Let $V \subseteq \mathbb{R}^n$ be the feasible set of one-shot payoffs, defined as: 
$$V \equiv co \left(  \{ u(a) : a \in A\} \right).$$
Let $F^*\equiv \prod _{i\in N} \left [\min _{a\in A}u_i(a),\ \max _{a\in A}u_i(a) \right].$
That is, $F^*$ is the smallest (multidimensional) cube that contains the one-shot feasible set $V$.

\subsection{OLG Repeated Game}
Given a stage game $G$, $\delta \in (0,1]$ and $T \in \mathbb{N}$, we define the \emph{OLG repeated game} $OLG(G,\delta ,T)$ as follows:\footnote{
We generalize the model and extend the main results in \Cref{sec:6}.}
\begin{itemize}
	\item In every period $t \in \mathbb{N}$, $G$ is played by $n$ finitely-lived players.
	\item For $i\in N$ and $d \in \mathbb{N}$, the player with $A_i$ in generation $d$ joins the game at the beginning of period $(d-1)nT+(i-1)T+1$, and lives for the following $nT$ periods until she or he retires at the end of period $dnT+(i-1)T$. The only exceptions are the players with $A_i$ for $i\in N\setminus\{ 1\}$ in generation 0, who participate in the game between periods 1 and $(i-1)T$ (see \autoref{o1}).\footnote{Such an overlapping structure can be found, for instance, in an organization with a fixed retirement age. Employees of the organization work with other employees of different ages. Employees join the organization when they are young. However, at the onset of the organization, some employees are old (generation 0).}
	\item Each player's per-period payoffs are discounted by a common discount factor $\delta$. 
\end{itemize}
\begin{figure}[t]
\centering
\scalebox{0.7}{
\begin{tabular}{|c|c|c|c|c|c|c|c|c|}\hline
Period&$1\sim T$&$T+1\sim 2T$&$2T+1\sim 3T$&$3T+1\sim 4T$&$4T+1\sim 5T$&$5T+1\sim 6T$&$6T+1\sim 7T$&$\cdots$\\\hline
$A_1$&\multicolumn{3}{c|}{Generation 1}&\multicolumn{3}{c|}{Generation 2}&\multicolumn{2}{c|}{Generation 3 $\cdots$}\\\hline
$A_2$&Generation 0&\multicolumn{3}{c|}{Generation 1}&\multicolumn{3}{c|}{Generation 2}&$\cdots$\\\hline
$A_3$&\multicolumn{2}{c|}{Generation 0}&\multicolumn{3}{c|}{Generation 1}&\multicolumn{3}{c|}{Generation 2 $\cdots$}\\\hline
\end{tabular}
}
\caption{Structure of the OLG repeated game with $n=3$}
\label{o1}
\end{figure}

Note that a player interacts with the same opponents (with unchanged generation) for $T$ periods. We refer to such $T$ periods as an \emph{overlap}. For each $i \in N$, we refer to any player whose action set is $A_i$ as ``player $i$." Whenever necessary, we explicitly mention players' generations.

When a sequence of actions $(a (t)) _{t=1}^{nT}\in A^{nT}$ is played throughout a player's life with $A_i$, her/his average payoff is as follows:\footnote{For the player with $A_i$ for $i\in N\setminus\{ 1\}$ in generation 0, replace $nT$ with $(i-1)T$.}
\begin{eqnarray}
\frac{1}{\sum _{t=1}^{nT}\delta^{t-1}}\sum _{t=1}^{nT}\delta ^{t-1}u_i(a (t)).
\nonumber\end{eqnarray}

We maintain the following assumption throughout this paper.\footnote{This assumption is also employed by \cite{Chen_2007_EL} and \cite{CF_2013_IJGT}.} 
\begin{ass}
\label{ass}
Players can access a Public Randomization Device (henceforth PRD) at the beginning of every period. Players observe the realizations of PRDs after their birth. 
\end{ass}

\label{sec:2}

\section{A Complete Characterization of the Feasible Payoff Set}
\label{sec:3}

We are interested in the set of payoffs that can be achieved by a sequence of action profiles in which every generation of players plays the same sequence of action profiles. We call such action sequences and feasible payoffs \emph{stationary}. Stationary action sequences are employed as equilibrium paths for folk theorems \cite[]{Kandori_1992_RES, Smith_1992_GEB}. In this paper, we focus on stationary action sequences and feasible payoffs. Thus, we omit ``stationary'' whenever the meaning is clear. The main result in this section shall provide a complete characterization of (stationary) feasible payoffs for any $\delta$ and $T$.

Since we restrict our attention to stationary sequences, we focus on the payoffs of players in generation $d\geq 1$. For $a^{[nT]}=(a^1,a^2,\cdots ,a^{nT})\in A^{nT}$, we define the average discounted payoff $U_i(a^{[nT]})$ of player $i$ as follows:
\begin{equation}
\label{eq:nnnn1}
U_i(a^{[nT]}):=\frac{1}{\sum _{k=1}^{nT}\delta ^{k-1}}\left(\sum _{k=1}^{nT}\delta ^{k-1}u_i(a^{(i-1)T+k}) \right),	
\end{equation}
where $a^{s}=a^{s-nT}$ for $s\geq nT+1$. Let $U( a^{[nT]}) \equiv (U_i (a^{[nT]}))_{ i\in N}.$

We define the \emph{(stationary) feasible payoff set} given $\delta$ and $T$ by 
$$
F(\delta,T):=co \left(\{ U(a^{[nT]}):a^{[nT]}\in A^{nT}\} \right).
$$
Since we are interested in stationary feasible payoffs, the definition would be the most permissible one, as it allows any correlation over length-$nT$ action sequences.

It is convenient to introduce an additional notation. Given $a^{[n]}=(a^1,a^2,\cdots ,a^n)\in A^n$ and $i\in N$, we define the value $v_i(a^{[n]})$ as follows: 
\begin{align}
v_i(a^{[n]})&:=\frac{1}{\sum _{k=1}^n\delta ^{(k-1)T}}\left(\sum _{k=1}^n\delta ^{(k-1)T}u_i(a^{i+k-1}) \right), \label{eq:nnn1}
\end{align}
where $a^{s}=a^{s-n}$ for $s\geq n+1$. Let $v( a^{[n]}) \equiv (v_i (a^{[n]}))_{ i\in N}.$ That is, $v_i (a^{[n]})$ represents the average discounted payoff of player $i$ from repeatedly playing the same action profile $a^k$ during the $k$-th overlap:
$$U_i ( \underbrace{a^1, \dots, a^1}_{\text{$T$ times}}, \dots, \underbrace{a^n, \dots, a^n}_{\text{$T$ times}}) = v_i (a^1,  \dots, a^n).$$
We call such sequences of action profiles \emph{stable}.
Alternatively, $v_i ( a^{ [n]})$ can be interpreted as the average discounted payoff over $n$ periods, where the ``effective'' discount factor is $\delta^T$. 

We are now ready to present our first main result.
\begin{them} 
\label{them:1}
For any $\delta \in (0,1]$ and $T \in \mathbb{N}$, 
\begin{equation}
\label{eq:nn1}
F(\delta,T) =co\left( \left \{v(a^{[n]}): {a^{[n]}\in A^n} \right \} \right).
\end{equation}

\end{them}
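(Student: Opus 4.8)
The plan is to prove the two inclusions separately. The inclusion "$\supseteq$" is the easy direction: every stable sequence $(\underbrace{a^1,\dots,a^1}_{T},\dots,\underbrace{a^n,\dots,a^n}_{T})$ is in particular an element of $A^{nT}$, and the identity $U_i(\underbrace{a^1,\dots,a^1}_{T},\dots,\underbrace{a^n,\dots,a^n}_{T}) = v_i(a^1,\dots,a^n)$ recorded just before the theorem shows that each $v(a^{[n]})$ lies in $\{U(a^{[nT]}) : a^{[nT]}\in A^{nT}\}$, hence in $F(\delta,T)$. Since $F(\delta,T)$ is convex, it contains $co(\{v(a^{[n]}) : a^{[n]}\in A^n\})$.

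For the nontrivial inclusion "$\subseteq$", since $F(\delta,T)$ is by definition the convex hull of $\{U(a^{[nT]}) : a^{[nT]}\in A^{nT}\}$, it suffices to show that for every single sequence $a^{[nT]}=(a^1,\dots,a^{nT})\in A^{nT}$, the point $U(a^{[nT]})$ belongs to $co(\{v(a^{[n]}) : a^{[n]}\in A^n\})$. The key observation is that $U_i(a^{[nT]})$ is, by its definition, a convex combination (with the fixed, player-independent-in-shape weights $\delta^{k-1}/\sum_{\ell}\delta^{\ell-1}$, $k=1,\dots,nT$) of the payoffs $u_i$ evaluated along player $i$'s realized action stream $a^{(i-1)T+1},\dots,a^{(i-1)T+nT}$ (indices mod $nT$). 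The plan is to regroup these $nT$ periods into $n$ blocks of $T$ consecutive periods — but with the block boundaries chosen to be the \emph{overlap} boundaries, i.e. periods $1,\dots,T$ form block $1$, periods $T+1,\dots,2T$ form block $2$, etc. Within block $k$, the $T$ action profiles played are $a^{(k-1)T+1},\dots,a^{kT}$; I want to replace each of these $T$ profiles by a single profile that is played for all $T$ periods of that block, i.e. to "stabilize" the sequence block by block.

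The mechanism for this replacement is the Public Randomizing Device: within a fixed block of $T$ periods, the vector of discount weights restricted to that block, once renormalized, is the same regardless of which block we are in only after accounting for the across-block factor $\delta^{(k-1)T}$ — and crucially, $v_i(a^{[n]})$ was defined precisely so that across-block it weights block $k$ by $\delta^{(k-1)T}$ while within a block it uses a single profile. So the reduction I need is purely within-block: for each block $k$, the within-block discounted average of $(u_i(a^{(k-1)T+1}),\dots,u_i(a^{kT}))$ over the $T$ periods, with weights proportional to $\delta^{0},\dots,\delta^{T-1}$, is a convex combination of the $T$ values $u_i(a^{(k-1)T+j})$; by the PRD, randomizing over $j\in\{1,\dots,T\}$ with exactly these weights and then playing the selected profile $a^{(k-1)T+j}$ for the entire block replicates this within-block average for \emph{every} player $i$ simultaneously. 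Doing this independently in each of the $n$ blocks exhibits $U(a^{[nT]})$ as a convex combination (over the $T^n$ joint choices) of points of the form $v(\tilde a^1,\dots,\tilde a^n)$ with $\tilde a^k\in\{a^{(k-1)T+1},\dots,a^{kT}\}\subseteq A$. This is exactly membership in $co(\{v(a^{[n]}):a^{[n]}\in A^n\})$.

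The main obstacle — and the step deserving the most care — is bookkeeping the index shifts so that the block structure is consistent across all players simultaneously. Player $i$'s life is shifted by $(i-1)T$ relative to player $1$'s, so "block $k$" in the calendar sense overlaps two of player $i$'s personal blocks; one must verify that the single definition $v_i(a^{[n]})$, with its wrap-around convention $a^s=a^{s-n}$, correctly captures the stabilized payoff for \emph{every} player at once, i.e. that stabilizing the \emph{calendar} blocks is the same as stabilizing each player's \emph{personal} overlaps. This is precisely the content of the displayed identity $U_i(\underbrace{a^1,\dots}_{T},\dots)=v_i(a^1,\dots,a^n)$, and the argument above should be phrased as: apply that identity after the PRD-based within-calendar-block reduction. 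A secondary point to address is that the PRD is assumed observable only after a player's birth, whereas $F(\delta,T)$ was defined using a PRD arising every $nT$ periods; but the construction above uses only $n$ independent randomizations, one per calendar block, each of which can be drawn at the start of that block and is therefore observed by every player alive during it — so the weaker observability assumption suffices, as promised in the text preceding the theorem.
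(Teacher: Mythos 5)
Your proposal is correct and follows essentially the same route as the paper: the easy direction via the identity $U(\text{stable sequence})=v(a^{[n]})$ plus convexity, and the substantive direction via an independent PRD at the start of each overlap that selects a single profile with probabilities proportional to the within-overlap discount weights $\delta^{0},\dots,\delta^{T-1}$ (your randomization over positions $j$ induces exactly the paper's distribution $\alpha^k(a)$ over profiles), yielding $U(a^{[nT]})$ as the convex combination with weights $\alpha^1(\tilde a^1)\cdots\alpha^n(\tilde a^n)$. Your bookkeeping point — that the construction works for all players simultaneously because no one retires within an overlap, so every living player's within-block weights are proportional to the same geometric vector up to an age-dependent prefactor — is precisely the paper's stated justification.
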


In words, the feasible payoff set coincides with the convex hull of the average discounted payoffs of length-$n$ sequences of action profiles (i.e., stable paths), each of which is to be played $T$ times. Notice that we could interpret the average discounted payoff of such a sequence of action profiles as the average discounted payoff of a length-$n$ sequence of action profiles, discounted by $\delta^T$. 

The characterization is useful since it substantially reduces the number of sequences of action profiles we need to consider: Regardless of $T$, it is sufficient to consider length-$n$ sequences. For instance, for the Prisoners' Dilemma game, where each player has two actions (so four action profiles), the result implies that it is sufficient to consider $4 \times 4 = 16$ sequences of action profiles to obtain the feasible payoff set, regardless of $\delta$ and $T$. The result also means that $\delta$ and $T$ affect the feasible payoff set only through $\delta^T$. Thus, any $(\delta, T)$ and $(\delta', T')$ with $\delta^T ={\delta'}^{T'}$ would result in the same feasible payoff set.

One may wonder whether the right-hand side of \eqref{eq:nn1} suggests that the feasible payoffs may not be implemented under our assumption regarding the observability of PRDs. That is, we assumed that a result of a PRD is observed only by the contemporary generation, whereas for each $T$ period (i.e., overlap), there is a player to be replaced; thus, it might require a stronger informational assumption, for instance, the results of PRDs are also observed by future generations. In the proof, however, we construct a sequence of PRDs, each of which is executed every overlap that generates the same average discounted payoff from a sequence of action profiles.

The underlying idea of the proof is as follows: Since no player retires within an overlap, every player discounts their payoff at $t+1$ relatively more than the payoff at $t$ in the overlap by $\delta$, although they discount payoffs differently depending on their ``age.'' This allows us to generate the same average discounted payoff for a given sequence of action profiles for a given overlap using a PRD at the beginning of this overlap, where the probability of playing a certain action profile during this overlap is determined in a way to mimic the sum of discounted payoffs whenever this action profile is played in the original sequence of action profiles. On the other hand, across overlaps, some player should retire, and for such a player, the relative discounting between $t$ and $t+1$ is different, which makes a similar construction of a PRD between overlaps unavailable.

\begin{proof}

We first show $ F(\delta,T) \subseteq co\left(\{v(a^{[n]}): {a^{[n]}\in A^n}\} \right) .$
Since $co\left(\{v(a^{[n]}): {a^{[n]}\in A^n}\} \right)$ is a convex set, it is sufficient to show that for an arbitrary sequence $a^{[nT]} = (a^1, \dots, a^{nT}) \in A^{nT}$, the average discounted payoff from this sequence can be obtained by a convex combination of elements in $\{v(a^{[n]}): {a^{[n]}\in A^n}\}$.  

We first claim that for each overlap $k=1, \dots, n$, the average discounted payoff within the overlap can be achieved by playing constant action profiles resulting from a PRD. To see this, consider the following PRD, $\alpha^k \in \Delta (A)$, for the $k$-th overlap, defined as follows:
$$\alpha^k (a) := \sum_{t=(k-1)T+1}^{kT} \frac{\delta^{t-1 - (k-1)T}}{ 1+ \dots + \delta^{T-1}}\mathbf{1}_{ \{{a}^t = {a} \}}, \quad \forall a \in A.$$
Clearly, $\sum_{ a \in A} \alpha^k ({a}) = 1$. The PRD is executed at the beginning of the overlap, and the players are supposed to play the realized action profile consecutively until the end of the overlap. To see that this PRD generates the same average discounted payoff as that of $a^{[nT]}$ for each player during the overlap, note that the PRD is constructed in such a way that it mimics discounting. Namely, if $a^t = a$ in $a^{ [nT]}$ for some $t  \in \{ 1, 2, \dots, T\}$, then the probability of playing $(a,\dots, a)$ increases by $\frac{\delta^{t-1- (k-1)T}}{\sum_{t=1}^T \delta^{t-1} }$. This construction is possible because no player retires within the overlap so that every player discounts the $t+1$-period payoff by $\delta$ times more than the $t$-period payoff. This is not the case as long as some player retires and is reborn (i.e., across overlaps). 

Lastly, we observe that the average discounted payoff of each player from this sequence of PRDs $(\alpha^1, \dots, \alpha^n)$ is a convex combination of $\{v(a^{[n]}):a^{[n]}\in A^n\}$, where the weight for each $(a^1, \dots, a^n ) \in A^n$ is $\alpha^1 (a^1)\times \cdots \times \alpha^n (a^n)$.

For the converse, observe that each element of $\{v(a^{[n]}):a^{[n]}\in A^n\}$ is in $F(\delta, T)$, and $F(\delta, T)$ is convex by definition. 
\end{proof}

\section{Comparative Statics of the Feasible Payoff Set}
\label{sec:4}

In this section, we study the comparative statics of the feasible set with respect to both $T$ and $\delta$. From \autoref{them:1} in the previous section, we know that the feasible payoff set depends on $\delta$ and $T$ only through $\delta^T$.

\subsection{Monotonicity}
Our main result in this section asserts that the feasible payoff set is non-increasing in $\delta^T$.

\begin{them}
\label{them:2}
For any $\delta, \delta' \in (0,1]$ and $T, T' \in \mathbb{N}$, if $\delta^T \leq  \delta{'}^{T'}$, $F(\delta', T') \subseteq F(\delta, T)$. In particular, the following hold:
\begin{enumerate}
	\item Given any $T \in \mathbb{N}$, $F(\delta ',T)\subseteq F(\delta ,T)$ for any $\delta, \delta'$ with $0< \delta <\delta ' \leq 1$. 
\item Given any $\delta \in (0,1]$, $F(\delta , T)\subseteq F(\delta ,T+1)$ for any $T \in \mathbb{N}$.	
\end{enumerate} \end{them}

Payoffs outside the convex hull of the stage game (i.e., $V$) are generated by trading payoffs across different generations. Such intertemporal trading can be beneficial because players have different ``ages,'' meaning they weigh the payoffs differently, even if their discount factor $\delta$ is the same. This means that one player's loss can differ from another's gain. A larger $T$ allows more interaction and more trading opportunities among the players. On the other hand, when $\delta$ increases with fixed $T$, all players put similar weights on all periods, so a player's gain is approximately equal to another player's loss. Hence, the feasible set shrinks as $\delta$ increases to 1 with $T$ fixed.

The rest of this section is devoted to proving \autoref{them:2}.\footnote{The weak set inclusions in the result cannot be strengthened to be strict. For instance, if the stage game payoff set is an $n$-dimensional cube, the feasible payoff set of the corresponding OLG repeated game would be unchanged with respect to the parameters.} In doing so, we shall also provide a characterization of players' optimal play to maximize the welfare given some weights for players, which might have some independent interest.

Thanks to \autoref{them:1}, we consider a sequence of $n$ pure action profiles without loss of generality to study the monotonicity. That is, players play the same action profile during an overlap that consists of $T$ periods. Additionally, since the feasible payoff set is convex, it is enough to show that the maximum ``score'' increases as $\delta$ (resp. $T$) becomes smaller (resp. larger) for each non-zero direction. 

Fix $\lambda \in \mathbb{R}^n \setminus \{ \mathbf{0}\}$. For a given $\delta \in (0,1]$ and $T \in \mathbb{N}$, let $\Delta \equiv  \Delta (\delta, T) =  \delta^T$.  Define $\lambda$-weighted welfare as:
\begin{equation}
\label{eq:nn2}
W_{ \lambda}^*(\Delta) :=\max_{ a^{[n]} =(a^1, \dots, a^n) \in A^n} W_\lambda (a^{[n]},\Delta),
\end{equation}
where
$$W_\lambda (a^{[n]}, \Delta ) := \sum_{i=1}^n \lambda_i v_i (a^1, \dots, a^n), \quad \forall a^{[n]} \in A^n$$
and $v_i$ is defined in \eqref{eq:nnn1}.\footnote{$W_\lambda$ becomes a function of $\Delta$ because $v_i$ depends on $\delta$ and $T$ only through $\Delta$.}

We want to show that $W_{ \lambda}^* (\Delta)$ is non-increasing in $\Delta$ (consequently, non-decreasing in $T$ and non-increasing in $\delta$).

We introduce a few more notations. Denote the set of optimal solutions of \eqref{eq:nn2} by $\mathcal{A}^*_\lambda (\Delta) \subseteq A^n$. Let $\mathcal{U}^*_\lambda (\Delta) \equiv \left \{ (u^k)_k : u^k = u (a^k), a^{[n]} \in \mathcal{A}_\lambda^* (\Delta) \right \} \subseteq \mathbb{R}^{n\times n}$, i.e., the set of sequences of the optimal payoff vectors. Notice that $\mathcal{U}^*_\lambda (\Delta)$ may be a singleton even when $\mathcal{A}^*_\lambda (\Delta)$ is not.  Lastly, given $u^{[n]} = (u^1, \dots, u^n) \in \mathbb{R}^{n \times n}$, for each $k \in \{ 1, \dots, n\}$, let
$$w_k (u^{[n]}) := \sum_{ i =1}^n \lambda_i u_i^{i+k-1},$$
where $u_i^m=u_i^{m-n}$ if $m>n$. That is, $w_k (u^{[n]})$ is the weighted sum of players' payoffs when their ``age'' is $k$ (i.e., they are in the $k$-th overlap in their lifetime). For instance, given $u^{[3]} = (u^1, u^2, u^3) \in \mathbb{R}^{3 \times 3}$ and $\lambda = (1,1,1)$, $w_1 (u^{[3]}) = u^1_1  + u^2_2   + u^3_3 $, $w_2 (u^{[3]}) = u^2_1  + u^3_2  + u^1_3  $, and $w_3 (u^{[3]}) =u^3_1   + u^1_2  + u^2_3 $.

Observe that for each $u^{ [n]}=(u^1,\dots, u^n) \in \mathcal{U}^*_\lambda (\Delta)$, 
\begin{equation}
\label{eq:nn3}
W_{ \lambda}^* (\Delta) = \frac{\sum_{k=1}^n \Delta^{k-1} w_k (u^{[n]}) }{\sum_{k=1}^n \Delta^{k-1}}.
\end{equation}

The following lemma is the key to proving the monotonicity. 
\begin{lem}
\label{lem:1}
Let $\Delta \in (0,1]$ and $u^{[n]}\in \mathcal{U}^*_\lambda (\Delta)$. For each $m =1, \dots, n-1$, 
\begin{equation}
\label{eq:2}
\sum_{k=1}^m \Delta^{k-1} w_k (u^{[n]}) \geq  \sum_{k=1}^m \Delta^{k-1} w_{n-m+k} (u^{[n]}).
\end{equation}
\end{lem}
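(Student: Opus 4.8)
The plan is to exploit the optimality of $u^{[n]}\in \mathcal{U}^*_\lambda(\Delta)$ via a cyclic-shift deviation argument. Fix $m\in\{1,\dots,n-1\}$. Consider the sequence $a^{[n]}=(a^1,\dots,a^n)$ achieving $u^{[n]}$, and compare it against its cyclic shift $\tilde a^{[n]}:=(a^{n-m+1},\dots,a^n,a^1,\dots,a^{n-m})$, i.e. the same $n$ action profiles rotated so that the block that was ``last'' is now played first. Since $\tilde a^{[n]}\in A^n$ is also feasible, optimality gives $W_\lambda(\tilde a^{[n]},\Delta)\le W_\lambda(a^{[n]},\Delta)=W_\lambda^*(\Delta)$. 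The key observation is that under a cyclic shift of the action profiles, the quantities $w_k$ get permuted cyclically as well: the welfare-at-age-$k$ of the shifted sequence equals $w_{k+(n-m)}$ (indices mod $n$) of the original, because shifting which profile is played at each age is exactly the relabeling $k\mapsto k+(n-m)$. Hence, using the representation in~\eqref{eq:nn3},
$$
\sum_{k=1}^n \Delta^{k-1} w_{k+(n-m)}(u^{[n]}) \;\le\; \sum_{k=1}^n \Delta^{k-1} w_k(u^{[n]}),
$$
where again all $w$-indices are taken mod $n$ in $\{1,\dots,n\}$.

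Next I would split each side of this inequality according to whether the shifted index wraps around. On the left, the terms $k=1,\dots,m$ have $w_{k+(n-m)}=w_{n-m+k}$ (no wrap, indices in $\{n-m+1,\dots,n\}$), while the terms $k=m+1,\dots,n$ have $w_{k+(n-m)}=w_{k-m}$ (these run over $w_1,\dots,w_{n-m}$). On the right, I would split into $k=1,\dots,m$ and $k=m+1,\dots,n$ likewise. Subtracting the common ``tail'' contributions and carefully matching the two ranges $\{1,\dots,n-m\}$ against the shifted-by-$\Delta^{m}$ ranges, the inequality should collapse — after factoring out an appropriate power of $\Delta$ — to precisely
$$
\sum_{k=1}^m \Delta^{k-1} w_k(u^{[n]}) \;\ge\; \sum_{k=1}^m \Delta^{k-1} w_{n-m+k}(u^{[n]}),
$$
which is~\eqref{eq:2}. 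The bookkeeping step is where I would be most careful: one has to verify that the reindexing produces exactly the claimed weights $\Delta^{k-1}$ on the short sums (rather than some shifted powers), and this relies on the fact that the ``long'' portions of the two sums — the blocks $w_1,\dots,w_{n-m}$ appearing on both sides — carry matching powers of $\Delta$ up to an overall factor $\Delta^{m}$, so they cancel cleanly and leave only the two length-$m$ sums with the same weight sequence $1,\Delta,\dots,\Delta^{m-1}$.

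The main obstacle I anticipate is purely the modular index arithmetic in the cancellation step; there is no analytic difficulty, since the whole argument is a single application of optimality against a cyclic shift plus a rearrangement of a finite sum. A secondary point worth stating explicitly is that the cyclic shift $\tilde a^{[n]}$ genuinely lies in $A^n$ and that by Theorem~\ref{them:1} it is legitimate to restrict attention to such length-$n$ stable sequences, so $W_\lambda^*(\Delta)$ is indeed the relevant maximand; both facts are immediate from the setup. If the direct cyclic-shift comparison turns out to give the wrong weights, the fallback is to iterate the single-step shift ($m=1$ case, shifting by one overlap) and combine the resulting $n-1$ elementary inequalities, but I expect the one-shot cyclic argument above to deliver~\eqref{eq:2} directly.
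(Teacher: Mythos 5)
Your central bookkeeping claim is where the argument breaks, and it is a genuine gap rather than a detail. Carrying out the reindexing you describe: with $\tilde a^{[n]}$ the cyclic shift by $n-m$, optimality gives $\sum_{k=1}^n\Delta^{k-1}w_{k+(n-m)}(u^{[n]})\le\sum_{k=1}^n\Delta^{k-1}w_k(u^{[n]})$, and after splitting at the wrap-around and collecting terms this is equivalent to
$$\sum_{k=1}^m\Delta^{k-1}\bigl(w_k(u^{[n]})-w_{n-m+k}(u^{[n]})\bigr)\;\ge\;\Delta^m\sum_{k=1}^{n-m}\Delta^{k-1}\bigl(w_k(u^{[n]})-w_{m+k}(u^{[n]})\bigr),$$
i.e.\ $D_m\ge\Delta^m D_{n-m}$ in the notation $D_m:=\sum_{k=1}^m\Delta^{k-1}(w_k-w_{n-m+k})$. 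The two ``long'' blocks do \emph{not} cancel cleanly: they leave the cross term $\Delta^m D_{n-m}$, which involves the complementary index $n-m$, and $D_m\ge\Delta^m D_{n-m}$ alone does not imply $D_m\ge0$. For $n=3$, $m=1$ the single-shift inequality reads $w_3(1+\Delta)\le w_1+\Delta w_2$, which is perfectly consistent with $w_3>w_1$ when $w_2$ is large, whereas (4) demands $w_1\ge w_3$. So the one-shot cyclic comparison, as stated, does not deliver the lemma.

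The repair is short and stays within your framework: also compare against the shift by $m$, which by the identical computation gives $D_{n-m}\ge\Delta^{n-m}D_m$. Chaining the two yields $D_m\ge\Delta^m D_{n-m}\ge\Delta^n D_m$, hence $(1-\Delta^n)D_m\ge0$ and $D_m\ge0$ for $\Delta\in(0,1)$ (at $\Delta=1$ this is vacuous, but the paper's own argument has the same limitation there). Note that your stated fallback --- iterating the single-step shift --- is not the right fix; the precise pair needed for each $m$ is the shift by $m$ together with the shift by $n-m$. With that repair your route is correct and genuinely different from the paper's: the paper never uses whole-sequence cyclic shifts, but instead uses single-overlap deviations (replace $a^k$ by $a^{k+m}$ in overlap $k$ only, legitimate because only that overlap's contribution to the welfare changes), derives a pair of inequalities for each $k$, and sums over $k$. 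Your repaired version is arguably cleaner --- two global deviations per $m$ instead of $2n$ local ones --- but as written the proposal's cancellation step fails and the conclusion does not follow.
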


In words, the lemma means that, at optimum, the payoffs in the earlier stages of a player are higher than those in the later stages (in the sense that for any $m \leq n-1$, the discounted sum of the first $m$ payoffs should be larger than that of the last $m$ payoffs). When there are only two players, this reduces to the condition that the payoffs when they are ``young'' must be larger than those when they are ``old'' for optimality. When there are more than two players, it is \emph{a priori} not clear what the corresponding expression could be. According to the lemma, for the case of three players, it is $w_1 (u^{[3]}) \geq w_3 (u^{[3]}) $ and $w_1 (u^{[3]})   + \Delta w_2 (u^{[3]})  \geq w_2 (u^{[3]})  + \Delta w_3 (u^{[3]}) $.

Let us explain the crux of the idea of the proof of this lemma with three players and $\lambda = (1,1,1)$. Assume $u^{[3]} = (u^1, u^2, u^3)$ is an optimal sequence of payoff vectors. From the optimality of $u^1$ at overlap $k=1$, we have $u^1_1 + \Delta^2 u^1_2 + \Delta u^1_3 \geq u^2_1 + \Delta^2 u^2_2+ \Delta u^2_3$. By multiplying both sides by $\Delta$, 
		$$\Delta u^1_1 + \Delta^3 u^1_2 + \Delta^2 u^1_3 \geq \Delta u^2_1 + \Delta^3 u^2_2 + \Delta^2 u^2_3.$$
On the other hand, by the optimality of $u^2$ at overlap $k=2$,
		$$\Delta u^2_1 + u^2_2 + \Delta^2 u^2_3\geq \Delta u^1_1 + u^1_2 + \Delta^2 u^1_3.$$
		From the two inequalities, we can conclude that $u^2_2 \geq u^1_2$. Intuitively, since the same generation of player 1 and 3 is active both at $k=1$ and $k=2$, while player 2 is replaced by the next generation, in order for $u^2$ to give a larger aggregate payoff at $k=2$, player 2 should have a larger payoff in $u^2$ than in $u^1$. A symmetric argument results in $u^3_3 \geq u^2_3$ and $u^1_1 \geq u^3_1$, and summing them up results in $w_1 (u^{[3]}) \geq w_3 (u^{[3]}) $. Applying a similar argument to the case of ``two-overlap apart,'' we have $\Delta u_2^1 + u_3^1 \leq \Delta u_2^3 + u_3^3$ (as a result, $w_1 (u^{[3]})   + \Delta w_2 (u^{[3]})  \geq w_2 (u^{[3]})  + \Delta w_3 (u^{[3]})$).

The following lemma states that the inequality \eqref{eq:2} in \autoref{lem:1} is sufficient to prove that the derivative of the aggregate payoff with respect to $\Delta$ is non-positive. 
\begin{lem}
\label{lem:2}
For $\Delta \in (0,1)$,
\begin{equation}
\label{eq:nnnn7}
\frac{\partial W_{ \lambda} }{\partial \Delta }(a^{[n]},\Delta)  \leq 0
\end{equation}
for any $a^{[n]}  \in \mathcal{A}_\lambda^* (\Delta)$.
\end{lem}

In the proof of this lemma, we show that the numerator of the derivative in \eqref{eq:nnnn7} is a weighted sum of $m-1$ terms, each of which is $(RHS-LHS)$ for $m=1, \dots, n-1$ in \autoref{lem:1}. This immediately implies the inequality in \eqref{eq:nnnn7}.

\begin{proof}[Proof of \autoref{them:2}]

We first observe that $W_\lambda^*(\Delta)$ is continuous in $\Delta$ everywhere because it is the maximum of continuous functions. 

In addition, we show that $W_\lambda^*(\Delta)$ is differentiable at all but finitely many $\Delta$. Observe that for any $a^{[n]}, \tilde{a}^{[n]} \in A^n$ such that $W_\lambda (a^{[n]}, \Delta)  \neq  W_\lambda (\tilde{a}^{[n]}, \Delta) $ for some $\Delta$, the number of $\Delta$ that solves $W_\lambda (a^{[n]}, \Delta) - W_\lambda (\tilde{a}^{[n]}, \Delta)=0$ is at most $n-1$, because it is a polynomial equation of degree at most $(n-1)$. Thus, the total number of intersections of $W_\lambda ( \cdot, \Delta)$ that all such pairs of action sequences can make is at most $\frac{|A^n||A^n -1|}{2} (n-1)$. This implies that we can find a partition $\{ (\underline{b}_l,\overline{b}_l]: l = 1,\dots, L \}$ of $(0,1]$, where $L \in \mathbb{N}$, such that for each $l = 1, \dots, L$, there is a unique payoff sequence $u_l^{[n]} = (u(a_l^k))_{k=1}^n$ corresponding to a solution $a_l^{[n]} =(a_l^k)_{k=1}^n$ which is optimal for any $\Delta \in (\underline{b}_l,\overline{b}_l]$. Hence, $W_\lambda^* (\Delta)$ is differentiable in the interior of $(\underline{b}_l,\overline{b}_l]$.\footnote{For $\Delta$ on the boundary, the derivative may not exist. For instance, in the OLG repeated game with the Prisoners' Dilemma depicted in \autoref{fig:1}, for large $\Delta$ and $T=1$, when the direction is $\lambda = (1,1)$, $(CC, CC)$ is optimal, while for some small $\Delta$, the optimal sequence is $(DC,CD)$. There is a cutoff $\hat{\Delta}$ for this change, and neither $(CC,CC)$ nor $(DC,CD)$ satisfies the description above: $(CC, CC)$ (or the corresponding payoff sequence) remains optimal for $[\hat{\Delta}, \hat{\Delta} + \epsilon)$, while $(DC, CD)$ remains optimal for $(\hat{\Delta} - \epsilon, \hat{\Delta}]$ for some small $\epsilon>0$.
 } Further, the derivative is non-positive by \autoref{lem:2}, i.e., 
\begin{equation*}
\label{eq:nn6}
\frac{d W^*_\lambda (\Delta) }{d\Delta } = \frac{\partial W_{ \lambda} }{\partial \Delta }(a_l^{[n]},\Delta) \leq 0.
\end{equation*}
Together with the continuity of $W^*_\lambda (\Delta)$, this implies its monotonicity. 
\end{proof}

\subsection{A Limit Characterization of the Feasible Payoff Set}

For the asymptotic cases (i.e., $\delta^T \nearrow 1$ or $\delta^T \searrow 0$), we can be more explicit about the shape of the feasible payoff set. 

When the players' effective discount factor $\delta^T$ is close to 1, there is not much difference in the discounting of young and old players. Therefore, the scope of intertemporal trade of payoffs is limited, and what is the best is to maximize the stage game payoffs for a given welfare weight. On the other hand, when the effective discount factor is close to 0, the difference in the discounting between young and old players is large. Hence, intertemporal trading can be significantly beneficial. The extreme form of such trading is to maximize the youngest player's (weighted) payoff. We summarize this discussion as a corollary:

\begin{coro}
The following hold:
\begin{enumerate}
\item For $\delta^T$ sufficiently close to 1, the solution of the optimization problem \eqref{eq:nn2} is to play some $a^k \in \argmax_{a' \in A} \lambda \cdot u(a')$ for each overlap $k$. Consequently, $\lim_{ \delta^T \nearrow 1} F(\delta, T) = V.$
	\item For $\delta^T$ sufficiently close to 0, the solution of the optimization problem \eqref{eq:nn2} is to play $a^k \in \argmax_{a' \in A} \lambda_{i_k} u_{i_k} (a')$, where $i_k \in N$ is the youngest player in overlap $k$. Consequently, $\lim _{\delta^T\searrow 0}F(\delta, T)=  \prod _{i\in N} \left [\min _{a\in A}u_i(a),\ \max _{a\in A}u_i(a) \right]$.	 	
\end{enumerate}

\end{coro}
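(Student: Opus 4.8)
The plan is to analyze the optimization problem (2) directly in the two limits, using the expression
$$
W_\lambda(a^{[n]},\Delta) = \frac{\sum_{k=1}^n \Delta^{k-1} w_k(u^{[n]})}{\sum_{k=1}^n \Delta^{k-1}}, \qquad u^k = u(a^k),
$$
and then translate the characterization of the optimal play into a description of $F(\delta,T)$ via Theorem~1 together with the fact that $F$ is the convex hull of the $v(a^{[n]})$ and hence is pinned down by its support function $\lambda \mapsto W_\lambda^*(\Delta)$.

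For part (1), as $\Delta \to 1$ the weights $\Delta^{k-1}/\sum_{j} \Delta^{j-1}$ all converge to $1/n$, so $W_\lambda(a^{[n]},\Delta) \to \frac1n \sum_{k=1}^n w_k(u^{[n]}) = \frac1n \sum_{k=1}^n \sum_{i} \lambda_i u_i(a^{i+k-1})$. Since $\{a^{i+k-1}: k=1,\dots,n\}$ runs over all of $a^1,\dots,a^n$ for each fixed $i$, this equals $\frac1n \sum_{k=1}^n \lambda\cdot u(a^k)$, which is maximized by choosing every $a^k \in \argmax_{a'} \lambda\cdot u(a')$; the resulting value is $\max_{a'}\lambda\cdot u(a')$, the support function of $V$. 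To upgrade ``$\Delta\to1$'' to ``$\Delta$ sufficiently close to $1$,'' I would use the finiteness argument already used in the proof of Theorem~2: $W_\lambda^*(\Delta)$ is a maximum over finitely many functions $W_\lambda(a^{[n]},\cdot)$, each a rational function of $\Delta$ continuous on a neighborhood of $1$; the upper envelope is attained by the constant-$\argmax$ sequences at $\Delta=1$ strictly above every sequence that ever uses an $a^k\notin\argmax_{a'}\lambda\cdot u(a')$, so by continuity the same sequences remain optimal on an interval $(\underline b,1]$. Doing this for every $\lambda$ and invoking compactness of the sphere (the number of distinct optimal payoff sequences is finite, so finitely many cutoffs suffice) gives $F(\delta,T)=V$ for all $\delta^T$ near $1$, hence the limit.

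For part (2), as $\Delta \to 0$ the weights $\Delta^{k-1}/\sum_j \Delta^{j-1}$ converge to $1$ for $k=1$ and to $0$ for $k\ge2$, so $W_\lambda(a^{[n]},\Delta)\to w_1(u^{[n]}) = \sum_i \lambda_i u_i(a^i)$, which separates across $k$: the optimal choice is $a^k \in \argmax_{a'} \lambda_? u_?(a')$ where the relevant index is the player whose age is $1$ in overlap $k$, i.e.\ the youngest player $i_k$. The subtlety is that the lower-order terms $\Delta^{k-1}w_k$ with $k\ge2$ still matter for tie-breaking, but again the finiteness/continuity argument from Theorem~2's proof shows that for $\Delta$ small enough the maximizer is constant on an interval $(0,\overline b]$, and the leading term forces that maximizer to be a $w_1$-maximizer; one then notes $\max_{a'}\lambda_{i_k}u_{i_k}(a') = \lambda_{i_k}^+\max_{a}u_{i_k}(a) - \lambda_{i_k}^-\min_{a}u_{i_k}(a)$, and summing over $k$ (each player occupies the ``youngest'' slot exactly once as $k$ ranges over $1,\dots,n$) recovers the support function of the box $F^* = \prod_i[\min_a u_i(a),\max_a u_i(a)]$. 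Hence $F(\delta,T)\to F^*$ as $\delta^T\searrow 0$.

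The main obstacle is making the passage from the pointwise limit of $W_\lambda(a^{[n]},\Delta)$ to a statement about the \emph{argmax} for all $\Delta$ in a one-sided neighborhood of the limit, uniformly enough to conclude a set equality for $F$. Pointwise convergence of the objective does not by itself control where the maximum is attained when several sequences are asymptotically tied at leading order; the fix is exactly the finite-intersection observation from the proof of Theorem~2 (any two sequences' objectives agree on at most $n-1$ values of $\Delta$), which reduces the problem to finitely many cutoffs and lets continuity do the rest. A secondary, more routine point is verifying the set-limit statements themselves: since each $F(\delta,T)$ and the candidate limit are compact convex sets, convergence of support functions pointwise in $\lambda$ (which is what the argmax analysis delivers) yields convergence in the Hausdorff sense, and in these two cases the limiting support function is \emph{identically} that of $V$ (resp.\ $F^*$) for all small enough $|\,1-\delta^T|$ (resp.\ $\delta^T$), so the limit is an actual equality on a neighborhood, not merely a limit.
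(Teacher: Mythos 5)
The paper offers no formal proof of this corollary --- it is stated as a summary of the informal discussion preceding it --- and your argument is the natural formalization of exactly that discussion: analyze the support function $W^*_\lambda(\Delta)$ in the limits $\Delta \to 1$ and $\Delta \to 0$, using Theorem 1 to reduce to length-$n$ sequences. Your identification of the limiting objectives ($\tfrac{1}{n}\sum_k \lambda\cdot u(a^k)$ as $\Delta\to 1$ and $w_1(u^{[n]})=\sum_i\lambda_i u_i(a^i)$ as $\Delta\to 0$), the finiteness/continuity argument pinning down the argmax on a one-sided neighborhood, and the passage from pointwise convergence of support functions to Hausdorff convergence of the compact convex sets are all correct and suffice for both parts.

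One claim in your closing paragraph is wrong, though it does not damage the corollary itself: it is not true that the support function of $F(\delta,T)$ is \emph{identically} that of $V$ (resp.\ $F^*$) for all $\delta^T$ near $1$ (resp.\ near $0$), so the statements are genuinely limits, not equalities on a neighborhood. For part (1), take the Battle of the Sexes of Figure 5 and $\lambda=(1,1)$: the sequence $(a^1,a^2)$ with $u(a^1)=(2,1)$, $u(a^2)=(1,2)$ consists of $\argmax_{a'}\lambda\cdot u(a')$ profiles yet has $w_1=4$, $w_2=2$, hence $W_\lambda = \frac{4+2\Delta}{1+\Delta} > 3 = \max_{u'\in V}\lambda\cdot u'$ for every $\Delta<1$; indeed Proposition 1 says $F(\Delta)\supsetneq V$ for \emph{all} $\Delta<1$ whenever $V\neq F^*$. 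Symmetrically, for part (2) one has $W^*_\lambda(\Delta)<h_{F^*}(\lambda)$ for $\Delta>0$ unless a single action profile simultaneously maximizes every $\lambda_i u_i$. The same example also shows that in part (1) it is not ``the same'' (constant) argmax sequences that remain optimal for $\Delta$ slightly below $1$ --- the non-constant argmax sequence strictly beats the constant one there --- but your conclusion survives because the constant argmax sequence supplies the lower bound $\max_{a'}\lambda\cdot u(a')$, which any sequence containing a non-argmax profile falls strictly below near $\Delta=1$; hence every optimizer still consists of argmax profiles. With the final over-claim deleted, your proof is correct as written.
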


Thus, for any $\delta \in (0,1)$, as $T \to \infty$, the feasible set converges to the $n$-dimensional cube. Similarly, for any $T$, as $\delta \to 0$, the feasible set converges to the $n$-dimensional cube.

\subsection{Strict Monotonicity}

In this subsection, we identify a necessary and sufficient condition for the strict expansion of the feasible set as $\delta^T$ becomes smaller.

We show that when $V\neq F^*$, the OLG feasible set satisfies the following strict monotonicity:
\begin{prop}
\label{prop:nnn1}
Suppose $V\neq F^*$. Then, for any $\delta, \delta' \in (0,1]$ and $T, T' \in \mathbb{N}$, if $\delta^T<\delta{'}^{T'}$, $F(\delta', T') \subsetneq F(\delta,T)$. Conversely, if $V = F^*$, then $F(\delta, T)=V$ for any $\delta \in (0,1], T \in \mathbb{N}$.
\end{prop}

That is, unless the one-shot feasible set is already a (multidimensional) cube, the OLG feasible set satisfies the strict monotonicity with respect to $\delta^T$.

Intuitively, if $\delta^T <1$ and $V \neq F^*$, players can find some opportunities of intertemporally trading payoffs (i.e., playing a non-constant action profile sequence) so that they can achieve payoffs beyond $V$ in some direction $\lambda$. It turns out that the benefit from the most efficient trading (i.e., that achieves $W^*_\lambda (\Delta)$) is strictly larger with smaller $\Delta$.

\section{Examples}
\label{sec:5}

In this section, we illustrate our main results in the previous section using well-known stage games in the repeated game literature.  

\subsection{The OLG Prisoners' Dilemma}
\label{subsec:pd}

\begin{figure}[t]
\centering
\begin{tabular}{|c|c|c|}
\hline
    & $C$      & $D$      \\ \hline
$C$ & $1,1$    & $-1,2$ \\ \hline
$D$ & $2,-1$ & $0,0$    \\ \hline
\end{tabular}
\caption{The Prisoners' Dilemma}
\label{fig:1}
\end{figure}
\begin{figure*}
\centering
\includegraphics[width=0.45\textwidth]{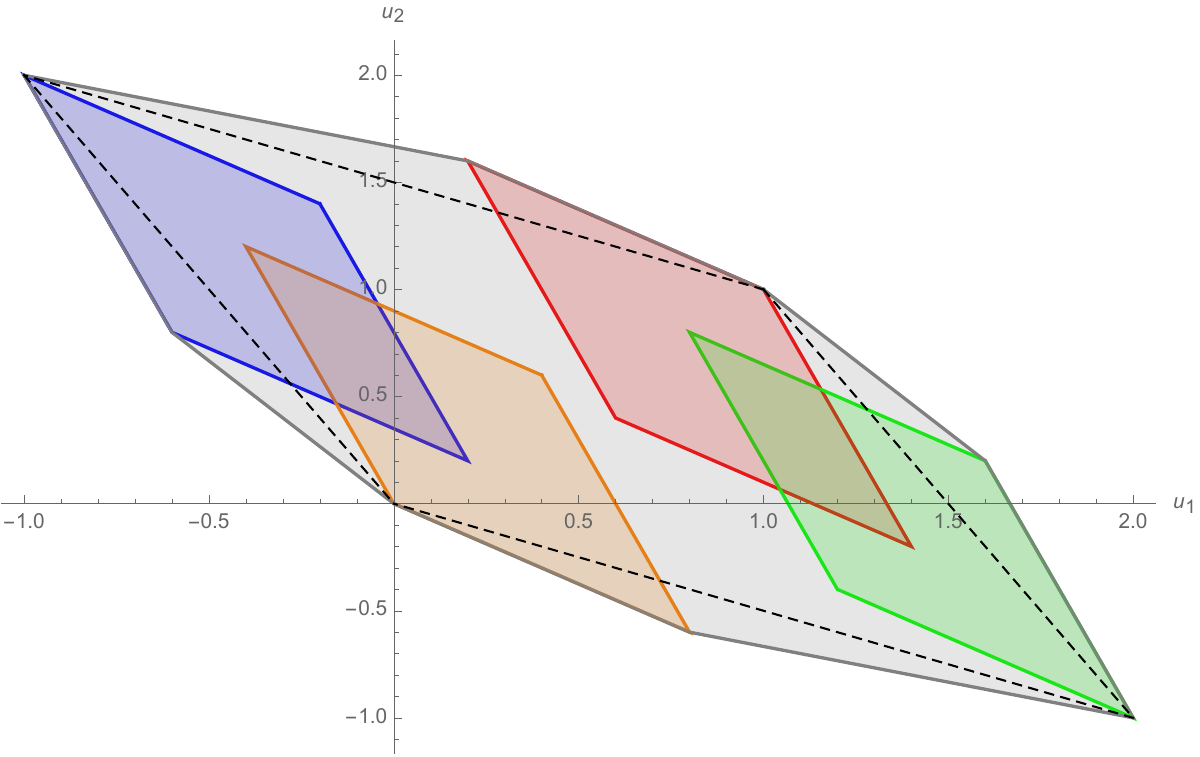}
\includegraphics[width=0.45\textwidth]{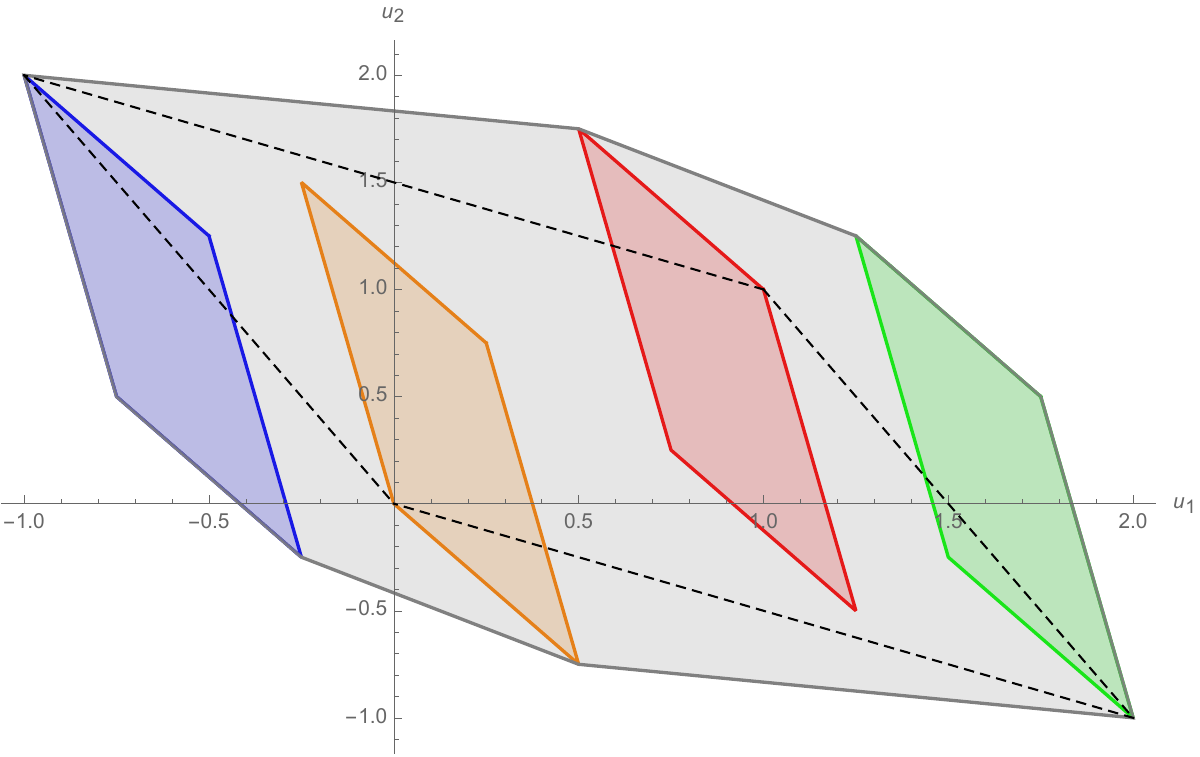}
\includegraphics[width=0.45\textwidth]{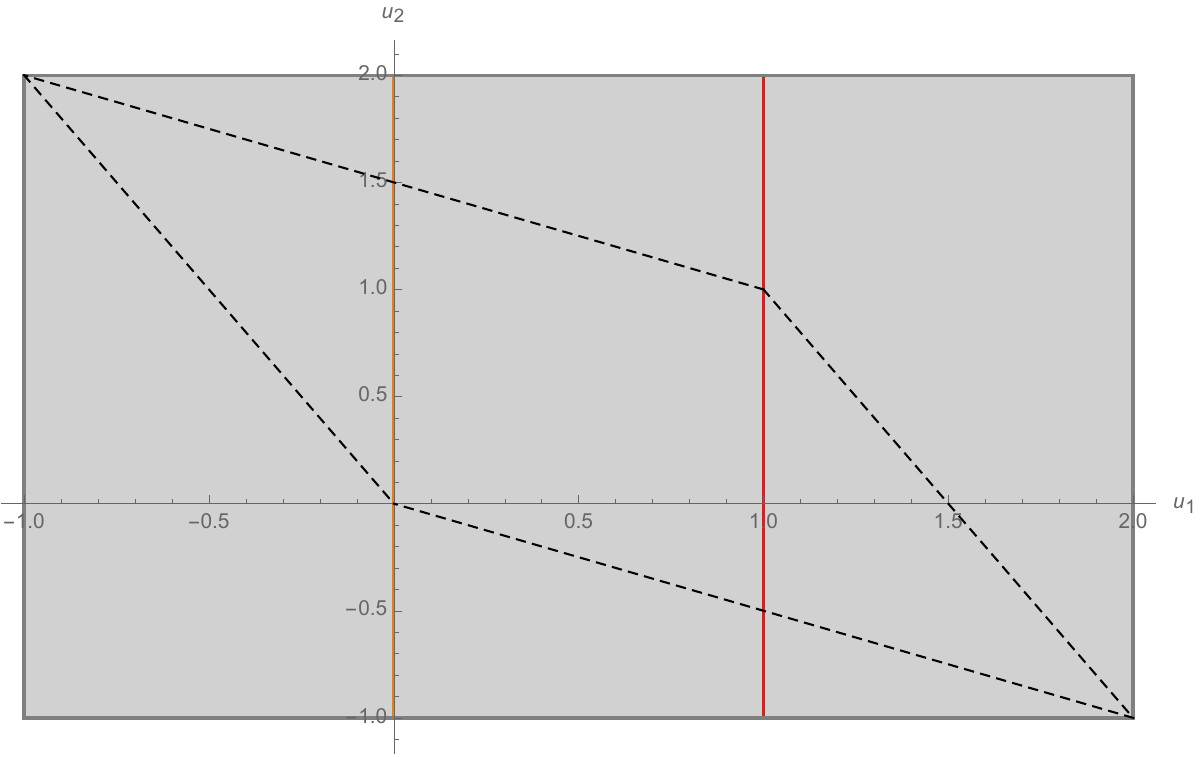}
	\caption{In each figure, the Gray region represents the feasible payoff set (for $\Delta =\frac{2}{3}$, $\Delta = \frac{1}{3}$ and $\Delta \to 0$ clockwise) of the OLG Prisoners' Dilemma game. In each figure, the region surrounded by the dotted lines is the convex hull of the stage game payoffs; the Red region represents the convex hull of the four payoffs, $v(CC, a^2)$, $a^2 \in \{ C,D\}^2$ in \eqref{eq:n2}. Similarly, the Green, Orange, and Blue regions represent the counterparts for $DC, DD$, and $CD$, respectively.}
	\label{fig:n1}
\end{figure*}

Consider the OLG repeated game $OLG(G,\delta ,T)$, where the stage game $G$ is the Prisoners' Dilemma (see \autoref{fig:1}). By \autoref{them:1}, the OLG feasible payoff set is:
\begin{equation}
\label{eq:n2}
	F(\delta, T) = co  \left( \bigcup_{ a \in \{CC, DC, DD, CD\}}  \{ v( a, CC), v( a, DC), v(a, DD), v(a, CD)  \}  \right ), 
\end{equation}
where $v (a^1, a^2) = \left (\frac{u_1 (a^1)  + \Delta u_1 (a^2) }{1+ \Delta}, \frac{\Delta u_2 
(a^1)  + u_2 (a^2)}{1 + \Delta } \right)$ for any $(a^1, a^2) \in A^2$ as previous and $\Delta = \delta^T$ (see \autoref{fig:n1}).

As $\Delta$ becomes smaller (either because players interact for a longer period or because they discount more), players put more weight on the payoffs obtained when they are young. Each colored region in the figure corresponding to a certain $a^1 \in A$ represents the payoff vectors that can be achieved from playing $a^1$ over the first overlap and various $a^2 \in \{CC, DC, DD, CD \}$ over the second overlap. The effect of a smaller $\Delta$ is represented by each region becoming horizontally narrower as player 1's payoff of the four payoff vectors becomes close to $u_1 (a^1)$, and vertically more spread out as player 2's payoff becomes close to $u_2(a^2)$. Thus, in the limit of $\Delta$ converging to $0$, each region becomes a vertical line and their convex combination becomes the rectangle. 

It is notable that as $\Delta$ changes, the sequences of action profiles that achieve the extreme points of the feasible payoff set may change. For instance, when $\Delta$ is large enough (e.g., $\Delta = 2/3$), $(CC, CC)$ yields an extreme point. Thus, no intertemporal trading is needed to achieve the payoff. As $\Delta$ becomes smaller (e.g., $\Delta = 1/3$), $(CC, CC)$ no longer generates an extreme point, while $(DC, CD)$ becomes a new sequence corresponding to one of the extreme points. Intuitively, as $\Delta$ becomes smaller, the benefit of intertemporal trading, such as $(DC,CD)$, becomes larger. Note that for sufficiently small $\Delta$, each player should play the action profile that maximizes her/his payoff in order to be on the efficient frontier.

\subsection{A Three-player Example from \cite{FM_1986_ECMA}}
\label{subsec:fm}
Consider another example involving three players. The stage game is described in \autoref{fig:2}. The feasible payoff set of the standard infinitely repeated game with this stage game is equal to $V$, which is the line segment between $(0,0,0)$ and $(1,1,1)$. Note that the dimension of $V$ (i.e., 1) is less than the number of players (i.e., 3).\footnote{\cite{FM_1986_ECMA} use this stage game to show that a folk theorem may fail for standard repeated games with infinitely-lived players if the stage game does not satisfy the full dimensionality, which is a sufficient condition for their folk theorem. \cite{Smith_1992_GEB} shows that the full-dimensionality is not necessary for his folk theorem for OLG repeated games. Since his folk theorem first chooses $T$ and then chooses sufficiently large $\delta$, it concerns the case when $\delta^T$ is close to 1. The feasible payoff set in this case is the ``smallest,'' according to our characterization, which is the line segment between $(0,0,0)$ and $(1,1,1)$.}
\begin{figure}
\centering
\begin{tabular}{|c|c|c|}
\hline
    & $A$     & $B$     \\ \hline
$A$ & $1,1,1$ & $0,0,0$ \\ \hline
$B$ & $0,0,0$ & $0,0,0$ \\ \hline
\end{tabular}
\quad 
\begin{tabular}{|c|c|c|}
\hline
    & $A$     & $B$     \\ \hline
$A$ & $0,0,0$ & $0,0,0$ \\ \hline
$B$ & $0,0,0$ & $1,1,1$ \\ \hline
\end{tabular}
\caption{The stage game of a three-player pure coordination game}	
\label{fig:2}
\end{figure}

On the other hand, the feasible payoff set of the OLG repeated game features the full dimensionality (i.e., it is equal to the number of players) for any $\Delta \in (0,1)$. When $\Delta = 1$, it is the line segment between $(0,0,0)$ and $(1,1,1)$, which coincides with the convex hull of the stage game payoffs. In contrast, when $\Delta \in (0,1)$, it is a polytope with a nonempty interior. 
\begin{figure}[t]
\centering
\includegraphics[width=0.5\textwidth]{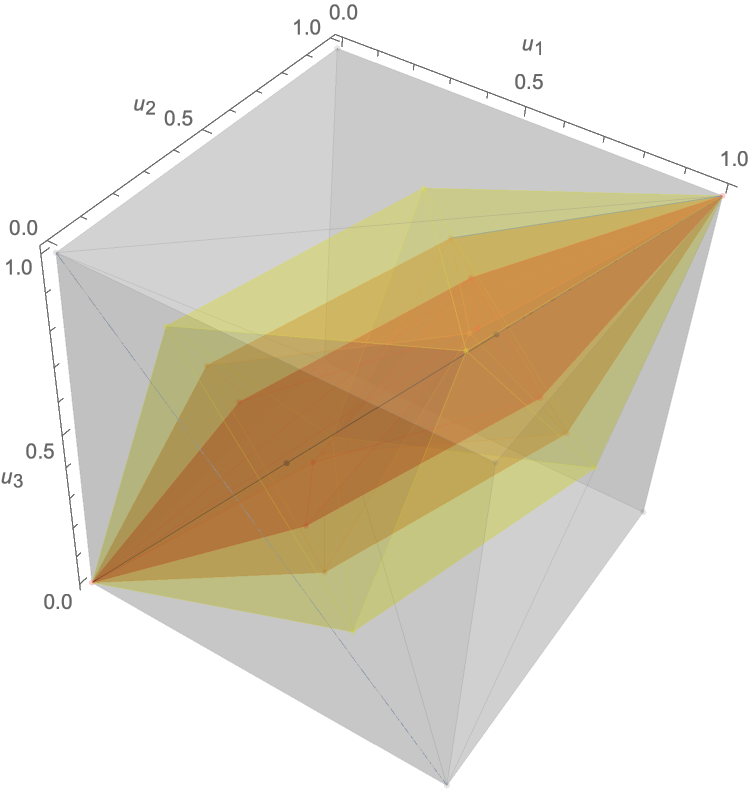}
\caption{The OLG feasible payoff set of the pure coordination game when $\Delta = 1$ (Black), $\Delta = 2/3$ (Dark Orange), $\Delta = 1/2$ (Light Orange), $\Delta = 1/3$ (Yellow), and $\Delta \to 0$ (Gray).}
\label{}
\end{figure}

\section{Extension}
\label{sec:6}

In this section, we generalize the model in \Cref{sec:2} and extend our main results in \Cref{sec:4}.

Let $G = (N,(A_i)_i,(u_i)_i)$ be a stage game, $\delta \in [0,1]$ and $T \in \mathbb{N}$. Also, let $\mathbf{M}=(M_1,M_2,\cdots ,M_n)\in\mathbb{N}^n$ be an \emph{overlapping structure}. An \emph{OLG repeated game with overlapping structure $\mathbf{M}$}, $OLG(G,\delta,T,\mathbf{M})$, is defined as follows:\footnote{\cite{Kandori_1992_RES} studies this model and provides a folk theorem.} 
\begin{itemize}
	\item 
For $i\in N$ and $d\in\mathbb{N}$, the player with $A_i$ in generation $d$ joins the game at the beginning of period $(d-1)\overline{M}_nT+\overline{M}_{i-1}T+1$, and lives for the following $\overline{M}_n T$ periods, until she or he retires at the end of period $d\overline{M}_nT+\overline{M}_{i-1}T$, where $\overline{M}_0=0$ and $\overline{M}_i=\sum_{j=1}^iM_j$ for $i\in N$. The only exceptions are the players with $A_i$ for $i\in N\setminus\{ 1\}$ in generation 0, who participate in the game between periods 1 and $\overline{M}_{i-1}T$.
\item Each player's per-period payoffs are discounted by a common discount factor $\delta$.
\end{itemize}

\begin{figure}[t]
\centering
\scalebox{0.5}{
\begin{tabular}{|c|c|c|c|c|c|c|c|c|}\hline
Period&$1\sim \overline{M}_1T$&$\overline{M}_1T+1\sim \overline{M}_2T$&$\overline{M}_2T+1\sim \overline{M}_3T$&$\overline{M}_3T+1\sim \overline{M}_3T+\overline{M}_1T$&$\overline{M}_3T+\overline{M}_1T+1\sim \overline{M}_3T+\overline{M}_2T$&$\overline{M}_3T+\overline{M}_2T+1\sim 2\overline{M}_3T$&$2\overline{M}_3T+1\sim 2\overline{M}_3T+\overline{M}_1T$&$\cdots$\\\hline
$A_1$&\multicolumn{3}{c|}{Generation 1}&\multicolumn{3}{c|}{Generation 2}&\multicolumn{2}{c|}{Generation 3 $\cdots$}\\\hline
$A_2$&Generation 0&\multicolumn{3}{c|}{Generation 1}&\multicolumn{3}{c|}{Generation 2}&$\cdots$\\\hline
$A_3$&\multicolumn{2}{c|}{Generation 0}&\multicolumn{3}{c|}{Generation 1}&\multicolumn{3}{c|}{Generation 2 $\cdots$}\\\hline
\end{tabular}
}
\caption{Structure of generalized OLG repeated game with $n=3$}
\label{aaaa}
\end{figure}

Note that for each $i \in \{ 1, \dots, n-1\}$, player $(i +1) $ is born later than player $i$ by $M_i \times T$ periods. Let us call such $M_i \times T$ periods an ``overlap.'' Thus, in this extension, overlaps may have different lengths. Note that the model in \Cref{sec:2} is a special case of this generalized model when $\mathbf{M}= (1, \dots, 1) \in \mathbb{R}^n$.\footnote{The generalized model is, however, not fully general. For example, suppose that the stage game has three players. We can consider an OLG repeated game where player 1 in each generation is born in an odd period and dies in the next period, whereas players 2 and 3 in each generation together are born in an even period and die in the next period. This model belongs to one structure of OLG repeated games. However, the OLG model in the present study does not include such a model.}

We define the feasible payoff set of $OLG (G, \delta, T,\mathbf{M})$ by 
$$
F(\delta,T, \mathbf{M}):=co \left(\{ U^{\mathbf{M}}(a^{[\overline{M}_nT]}):a^{[\overline{M}_nT]}\in A^{\overline{M}_nT}\} \right),
$$
where $U^{\mathbf{M}}: A^{\overline{M}_nT} \to \mathbb{R}^n$ is defined similarly as in \eqref{eq:nnnn1}. 

Recall that a sequence of action profiles is stable if players play the same action profile during each overlap $k=1, \dots, n$. Such a sequence can be identified with some $a^{[n]} \in A^n$. We define player $i$'s average discounted payoff from a stable sequence of action profiles by
$$
v_i^{\mathbf{M}}(a^{[n]}):=\frac{1}{\sum _{k=1}^{\overline{M}_n}\Delta ^{k-1}}
\left(\sum _{k=1}^n\Delta ^{\overline{M}_{k-1}} \left( \sum_{s=1}^{M_k}\Delta^{s-1}u_i(a^{i+k-1}) \right ) \right), \quad \forall a^{[n]} \in A^n. \label{eq:nnn2}
$$
Note that when $\mathbf{M} = (1, \dots, 1)$, $F(\delta, T, \mathbf{M})$ and $v_i^{M}$ reduce to $F(\delta, T)$ and $v_i$ in \Cref{sec:3}, respectively. We obtain the following theorems extending the previous one.

\begin{them}
	 For any $\delta \in (0,1]$, $T \in \mathbb{N}$ and $\mathbf{M}\in\mathbb{N}^n$, 
$$
F(\delta,T,\mathbf{M}) =co\left( \left \{v^{\mathbf{M}}(a^{[n]}): {a^{[n]}\in A^n} \right \} \right).
$$
\end{them}

That is, the feasible payoff set for the extended model is characterized similarly to the previous one, i.e., it is sufficient to consider length-$n$ sequences of action profiles regardless of $\delta$ and $T$, and now also regardless of overlapping structure $\mathbf{M}$. The idea behind this result is the same as in \autoref{them:1}, so we omit the proof.

\begin{them}
\label{them:4}
Fix any $\mathbf{M} \in \mathbb{N}^n$. For any $\delta, \delta' \in (0,1]$ and $T, T' \in \mathbb{N}$, if $\delta^T \leq  \delta{'}^{T'}$, $F(\delta', T',\mathbf{M}) \subseteq F(\delta, T,\mathbf{M})$. In particular, the following hold: 
\begin{enumerate}
	\item Given any $T \in \mathbb{N}$, $F(\delta ',T, \mathbf{M})\subseteq F(\delta ,T,\mathbf{M})$ for any $\delta, \delta'$ with $0< \delta <\delta ' \leq 1$.
\item Given any $\delta \in (0,1]$, $F(\delta ,T,\mathbf{M})\subseteq F(\delta ,T+1,\mathbf{M})$ for any $T \in \mathbb{N}$.	
\end{enumerate}
	
\end{them}
That is, the monotonicity of the feasible payoff set holds for general overlapping structures.

The basic idea of the proof is to regard the extended model as the previous model in \Cref{sec:2} with some ``dummies.'' For instance, consider an OLG repeated game $OLG (G, \delta, T=1, \mathbf{M}=(1,3))$, where $G$ is a stage game with two players $i=1,2$ for some payoff functions $u_1$ and $u_2$ (see \autoref{fig:aaaa1}). 
\begin{figure}[t]
\center
\scalebox{1.0}{
\begin{tabular}{|c|c|c|c|c|c|c|c|c|c|c|c|c|c|}\hline
Period&1&2&3&4&5&6&7&8&9&10&11&12&$\cdots$\\\hline
$A_1$&\multicolumn{4}{c|}{Generation 1}&\multicolumn{4}{c|}{Generation 2}&\multicolumn{4}{c|}{Generation 3}&$\cdots$\\\hline
$A_2$&0&\multicolumn{4}{c|}{Generation 1}&\multicolumn{4}{c|}{Generation 2}&\multicolumn{4}{c|}{Generation 3 $\ \cdots$}\\\hline
\end{tabular}
}
\caption{$OLG (G, \delta, T=1, \mathbf{M}=(1,3))$}
\label{fig:aaaa1}
\end{figure}
Now let $\tilde{G}$ be a stage game in which there are two dummy players, players 3 and 4, who have a singleton action set and whose payoffs are assumed to be constantly 0, in addition to players 1 and 2 (players 1 and 2's payoffs in $\tilde{G}$ are the same as in $G$). Then, consider an auxiliary OLG repeated game $OLG (\tilde{G}, \delta, T=1, \tilde{\mathbf{M}} = (1,1,1,1))$, where player $i \in \{1,2,3,4\}$ in generation 1 is born at $t=i$ (see \autoref{fig:aaaa2}). Note that this game has the OLG structure we studied in the previous sections. Since players 3 and 4 have payoffs of 0 always, their payoffs do not contribute to the calculation of the $\lambda$-weighted welfare in \eqref{eq:nn2}.

\begin{figure}[t]
\center
\scalebox{1.0}{
\begin{tabular}{|c|c|c|c|c|c|c|c|c|c|c|c|c|c|}\hline
Period&1&2&3&4&5&6&7&8&9&10&11&12&$\cdots$\\\hline
$A_1$&\multicolumn{4}{c|}{Generation 1}&\multicolumn{4}{c|}{Generation 2}&\multicolumn{4}{c|}{Generation 3}&$\cdots$\\\hline
$A_2$&0&\multicolumn{4}{c|}{Generation 1}&\multicolumn{4}{c|}{Generation 2}&\multicolumn{4}{c|}{Generation 3 $\ \cdots$}\\\hline
$\tilde{A}_3$&\multicolumn{2}{c|}{0}&\multicolumn{4}{c|}{Generation 1}&\multicolumn{4}{c|}{Generation 2}&\multicolumn{3}{c|}{Generation 3 $\ \cdots$}\\\hline
$\tilde{A}_4$&\multicolumn{3}{c|}{0}&\multicolumn{4}{c|}{Generation 1}&\multicolumn{4}{c|}{Generation 2}&\multicolumn{2}{c|}{Generation 3 $\ \cdots$}\\\hline
\end{tabular}
}
\caption{$OLG (\tilde{G}, \delta, T=1, \tilde{\mathbf{M}} = (1,1,1,1))$}
\label{fig:aaaa2}
\end{figure}

\section{Discussions}
\label{sec:7}

\subsection{Non-stationary Play}
\label{subsec:discuss}

Thus far, we have restricted our attention to stationary feasible payoffs. In this subsection, we discuss how the relaxation of this restriction would affect the monotonicity result in terms of $\delta$. We consider non-stationary sequences of players' action profiles, in which different generations of the same player may play different sequences of action profiles during their lifetime. One way to extend the concept of monotonicity in this case would be as follows: Given a sequence $(\bar{u}^y(\delta))_{y \in \mathbb{N}}$ of players' average discounted payoffs for each generation $y=1,2,\dots$, which is feasible with discount factor $\delta$, we ask whether the same sequence is feasible with $\delta' < \delta$. 

The following example shows that this is not the case: Consider the OLG repeated game with two players and two possible stage game payoffs of $(1,0)$ and $(0,1)$. Suppose $T=1$. Consider the following sequence of payoff vectors for each $t = 1,2, \dots$:
$(1,0), (1,0), (0,1), (0,1), \dots.$
 That is, the first two periods give $(1,0)$, followed by $(0,1)$ forever. The corresponding average payoff for each generation of player 1 is $\bar{u}_1^1 = 1$ for the first generation, and $\bar{u}_1^y= 0$ for any $y \geq 2$. Observe that player 2 in the first generation, who is born at $t=2$, obtains the average payoff $\bar{u}_2^1 = \frac{ \delta}{ 1+ \delta}$ and $\bar{u}_2^y = 1$ for any $y \geq 2$.

Now consider $\delta' < \delta$. Since $\bar{u}_1^1= 1$, the first two periods must give players $(1,0), (1,0)$. Note that the maximum payoff of player 2 in the first generation is obtained when $(0,1)$ is given at $t=3$, yielding the average payoff $\frac{\delta'}{1+\delta'}$, which is strictly smaller than $\frac{\delta}{ 1+ \delta}$.

 \subsection{Implications for Equilibrium Payoffs}
 \label{sec:6.2}

The monotonicity result of the feasible payoff set can also shed light on the equilibrium (Nash equilibrium or subgame perfect equilibrium) payoff set. It follows from our findings that the set of ``stationary'' equilibrium (i.e., each generation of a player employs the same strategy) payoffs is, in general, not increasing in $\delta$. Although it is obvious, we state this as a proposition below.\footnote{This contrasts with the case of repeated games with infinitely-lived players with a PRD \cite[]{APS_1990_ECMA}, where the subgame perfect equilibrium payoff set is monotonically increasing in players' discount factor. When there is no PRD, this monotonicity might not hold (see \cite{MOS_2002_GEB,Yamamoto_2010_IJGT}).}

\begin{prop}
Consider an OLG repeated game with the stage game $G$. Suppose that any extreme point of $V$ is a payoff vector of a static Nash equilibrium of $G$. Then the stationary subgame perfect equilibrium payoff set is non-increasing in $\delta$ and non-decreasing in $T$. 
\end{prop}

\begin{proof}
Under the hypothesis, the OLG feasible payoff set coincides with the stationary subgame perfect equilibrium payoff set for any given $\delta$ and $T$. Thus, \autoref{them:2} implies the result.
\end{proof}

Intuitively, if any extreme points of $V$ can be achieved by some Nash equilibrium, there is no issue of providing incentives to players to play a certain action. In general, to implement a certain action, appropriate incentives should be given by varying continuation payoffs. This suggests that there would be a trade-off of players being more patient in OLG repeated games: Being more patient, players are more easily disciplined as they care more about future payoffs. On the other hand, it is also costly, as it makes players more difficult to make intertemporal trades of their payoffs. This suggests that the ``optimal'' discount factor of players may be some intermediate value for some games.

Let us consider a simple example to illustrate the point made in the previous paragraph. Consider the OLG repeated games with $T=1$, where the stage game is described in \autoref{fig:6} and let $\lambda = (1, 1)$. We claim that the optimal $\delta$ that maximizes the $\lambda$-weighted sum of players' average discounted payoffs in (``stationary'' subgame perfect) equilibrium is $\delta = \frac{1}{2}$. To see this, we first observe that if $\delta \geq 1/2$, playing $(L, L),(R,R)$ (they are the action profiles that give the maximum sum of players' payoffs) for odd and even periods, respectively, can be achieved by a stationary subgame perfect equilibrium. This is because when players are old, they have no incentive to deviate; in addition, each player's action when they are young can be supported by punishing any deviation from it with $(M,M)$ (this is a one-shot Nash profile and gives the minmax payoff to each player): $2 + \delta  \geq 3 - \delta$ or $\delta \geq \frac{1}{2}$. As a result, the monotonicity result implies that the smallest $\delta = \frac{1}{2}$ yields the largest weighted sum of payoffs. 
\begin{figure}[]
\centering
\begin{tabular}{|c|c|c|c|}
\hline
    & $L$     & $R$     & $M$     \\ \hline
$L$ & $2,1$   & $-3,-3$ & $-3,-1$ \\ \hline
$R$ & $-3,-3$ & $1,2$   & $-5,3$  \\ \hline
$M$ & $3,-5$  & $-1,-3$ & $-1,-1$ \\ \hline
\end{tabular}
\caption{The stage game used in the example in \Cref{sec:6.2} }
\label{fig:6}
\end{figure}

\section{Conclusion} 
\label{sec:8}

In the present study, we analyze the feasible payoff set of OLG repeated games. First, we show that the set can be characterized by a convex combination of the average discounted payoffs of length-$n$ sequences of action profiles, where each of the action profiles is played for $T$ periods consecutively. Second, we find that the feasible payoff set is monotonically decreasing in $\delta$ and increasing in $T$.

A natural and important direction for future research is to study the set of equilibrium payoffs. In particular, it would be meaningful to extend our understanding of the trade-off of having more patient players, which is briefly described in \Cref{sec:6.2}. Relatedly, one might also think of a recursive characterization of the subgame perfect equilibrium payoff set, as in \cite{APS_1990_ECMA}. Making such a recursive characterization would be relatively straightforward. However, unlike the standard repeated games with infinitely-lived players, the object of the recursive characterization would be the set of the equilibrium payoff \emph{sequences} (of the length of $nT$) even for stationary equilibria, where each generation plays the same action, because players' ages keep changing. One might also consider the class of strongly symmetric equilibria \cite[]{APS_1986_JET} for symmetric stage games, as it allows a much simpler recursive characterization for standard repeated games. However, note that, in OLG repeated games, players are not symmetric, as their ages are different, even when the underlying stage game is symmetric.

\appendix
\section{Omitted Proofs}
\label{sec:app}

\subsection{Proof of \autoref{lem:1}}
\label{sec:proof_lem1}

\begin{proof}

Denote the ``age'' of player $i$ at overlap $k$ by $y_k (i) \in \{ 1,\dots, n\}$.

Consider overlap $k  \in \{ 1, \dots, n\}$ and $m \in \{ 1,\dots, n-1\}$. Let $k' \equiv k +m \text{ (mod $n$)}$. Then, from the optimality of $u^k$ at overlap $k$, 
		$$\sum_{ i =1}^n   \Delta^{ y_k (i)-1 } \lambda_i  u_i^k   \geq \sum_{i=1}^n \Delta^{y_k(i)-1} \lambda_i u_i^{k'} .$$
		By multiplying both sides by $\Delta^m$,
$$\sum_{ i =1}^n   \Delta^{ y_k (i)+ m-1 } \lambda_i  u_i^k  \geq \sum_{i=1}^n \Delta^{y_k(i) + m-1} \lambda_i u_i^{k'} .$$
Note that $y_{k'} (i) = y_k (i) + m$ if $y_k (i) + m \leq n$; otherwise $y_{k'} (i)= y_k (i) + m-n$.
\begin{multline*}
\sum_{ i : y_k(i) + m \leq n}   \Delta^{ y_{k'} (i)-1 } \lambda_i  u_i^k + \Delta^n \sum_{ i : y_k(i) + m >n}   \Delta^{ y_{k'} (i)-1 } \lambda_i  u_i^k \\
\geq \sum_{i : y_k(i) + m \leq n} \Delta^{y_{k'}(i)-1} \lambda_i u_i^{k'}  +\Delta^n\sum_{i : y_k(i) + m >n} \Delta^{y_{k'}(i)-1} \lambda_i u_i^{k'},	
\end{multline*}
or, equivalently, 
\begin{multline}
\label{eq:nnn6}
\sum_{ i : y_k(i) + m \leq n}   \Delta^{ y_{k'} (i)-1 } \lambda_i  u_i^k - \sum_{i : y_k(i) + m \leq n} \Delta^{y_{k'}(i)-1} \lambda_i u_i^{k'}  \\
\geq   \Delta^n \left ( \sum_{i : y_k(i) + m >n} \Delta^{y_{k'}(i)-1} \lambda_i u_i^{k'} -  \sum_{ i : y_k(i) + m >n}   \Delta^{ y_{k'} (i)-1 } \lambda_i  u_i^k \right).	
\end{multline}
On the other hand, from optimality at overlap $k'$, we have 
\begin{multline*}
\sum_{ i : y_k(i) + m \leq n}   \Delta^{ y_{k'} (i)-1 } \lambda_i  u_i^{k'}  + \sum_{ i : y_k(i) + m >n}   \Delta^{ y_{k'} (i)-1 } \lambda_i  u_i^{k'}  \\
\geq \sum_{i : y_k(i) + m \leq n} \Delta^{y_{k'}(i)-1} \lambda_i u_i^k  +\sum_{i : y_k(i) + m >n} \Delta^{y_{k'}(i)-1} \lambda_i u_i^k, 
\end{multline*}
or
\begin{multline}
\label{eq:nnn7}
\sum_{i : y_k(i) + m \leq n} \Delta^{y_{k'}(i)-1} \lambda_i u_i^k- \sum_{ i : y_k(i) + m \leq n}   \Delta^{ y_{k'} (i)-1 } \lambda_i  u_i^{k'}    \\
\leq     \sum_{ i : y_k(i) + m >n}   \Delta^{ y_{k'} (i)-1 } \lambda_i  u_i^{k'} -\sum_{i : y_k(i) + m >n} \Delta^{y_{k'}(i)-1} \lambda_i u_i^k.
\end{multline}
In order to satisfy both \eqref{eq:nnn6} and \eqref{eq:nnn7}, it must be 
$$\sum_{ i : y_k(i) + m >n}   \Delta^{ y_{k'} (i)-1 } \lambda_i  u_i^{k'}  \geq \sum_{i : y_k(i) + m >n} \Delta^{y_{k'}(i)-1} \lambda_i u_i^k,$$
or, equivalently,
$$\sum_{ i : y_{k'}(i) \in \{ 1, \dots, m\} }   \Delta^{ y_{k'} (i)-1 } \lambda_i  u_i^{k'}  \geq \sum_{i : y_{k'}(i) \in \{ 1, \dots, m\}} \Delta^{y_{k'}(i)-1} \lambda_i u_i^k .$$
Summing up both sides over $k \in \{1, \dots, n\}$, we have the inequality in the statement.
	\end{proof}

\subsection{Proof of \autoref{lem:2}}
\label{proof:lem2}
\begin{proof}

For $a^{[n]} \in \mathcal{A}^{*}_{\lambda}(\Delta)$, denote the numerator of $\frac{\partial W_{ \lambda}}{\partial \Delta } (a^{[n]},\Delta )$ by $\eta(\Delta)$. We will show that $\eta (\Delta) \leq 0$, thereby $\frac{\partial W_{ \lambda}}{\partial \Delta } (a^{[n]},\Delta ) \leq 0$. From \eqref{eq:nn3}, observe that for $u^{[n]} \in \mathcal{U}^{*}_{\lambda}(\Delta)$,
\begin{align*}
\eta (\Delta) &= \left (\sum_{k=1}^n \frac{d\Delta^{k-1} }{d\Delta }w_k (u^{[n]}) \right) \left(\sum_{m=1}^n \Delta^{m- 1} \right) - \left( \frac{d \sum_{m=1}^{n} \Delta^{m-1}} { d\Delta}\right)  \sum_{k=1}^n \Delta^{k-1} w_k(u^{[n]})\\
&=\left (\sum_{k=1}^n (k-1)\Delta^{k-2} w_k(u^{[n]}) \right) \left(\sum_{m=1}^n \Delta^{m-1} \right) - \left(\sum_{k=m}^{n} (m-1) \Delta^{m-2}\right)  \sum_{k=1}^n \Delta^{k-1} w_k(u^{[n]})\\
&=\sum_{k=1}^n \left ((k-1) \Delta^{k-2} \sum_{m=1}^{n} \Delta^{m-1} - \left( \sum_{m=1}^n (m-1) \Delta^{m-1}  \right) \Delta^{k-2}  \right) w_k(u^{[n]})\\
&=\sum_{k=1}^n  \left( \sum_{m=1}^n (k-m) \Delta^{k+m-3} \right)w_k(u^{[n]}).
\end{align*}
Recall the $m$-th constraint, $m \in \{ 1, \dots, n-1\}$, is
$$-\sum_{k=1}^m \Delta^{k-1} w_k (u^{[n]})  + \sum_{k=1}^m \Delta^{k-1} w_{n-m+k} (u^{[n]})\leq 0.   \quad (\# m)$$
We will show that 
$$\eta (\Delta) = \sum _{m=1}^{n-1}p_m\times (LHS\ of\ \# m),$$
where for $1\leq m\leq n-1$
$$p_m \equiv \Delta ^{n-2}+\Delta ^{n-3}+\cdots +\Delta ^{n-m-1}.$$
Note that as $p_m \geq 0$, this implies $\eta (\Delta) \leq 0$. To see the equality, fix $j \in \{ 1,\dots, n\}$. Observe that for any constraint $m\geq j$, there exists a non-positive term $- \Delta^{j-1}w_j (u^{[n]})$. In addition, for any constraint $m$ with $j \geq n-m+1$, there exists a non-negative term $\Delta^{j-n+m-1} w_j (u^{[n]})$. Therefore, the coefficient of $w_j (u^{[n]})$ is:
\begin{align*}
&\sum_{m = j}^{n-1} p_m (-\Delta ^{j-1} ) + \sum_{m =n+1-j}^{n-1} p_m \Delta^{j-n+m-1}\\
&=\sum_{m = j}^{n-1} (\Delta^{n-2} + \cdots + \Delta^{n-m-1}) (-\Delta^{j-1}) + \sum_{ m=n+1-j}^{n-1} (\Delta^{n-2} + \cdots + \Delta^{n-m-1}) \Delta^{j-n+m-1} \\
&=\sum_{ m=j}^{n-1} (-\Delta^{n+j-3} - \cdots - \Delta^{n-m+j-2})  + \sum_{m=n+1-j}^{n-1} (\Delta^{j+m-3} +\cdots + \Delta^{j-2})\\
&=(- \Delta^{n+j-3}- \cdots - \Delta^{n-2}) + (- \Delta^{n+j-3} - \cdots - \Delta^{n-1}) + \cdots + (-\Delta^{n+j-3} - \cdots - \Delta^{j-1})  \\
&+ (\Delta^{n-2} + \cdots + \Delta^{j-2}) + (\Delta^{n-1} + \cdots + \Delta^{j-2})  + \cdots + (\Delta^{j+n-4} + \cdots + \Delta^{j-2})\\
&=- \Delta^{n+j-3} ((n-j)-0) - \Delta^{n+j-4} ((n-j)-1)- \cdots - \Delta^{n-2} ((n-j) -(j-1))\\
&- \Delta^{n-3} ((n-j-1)- (j-1))- \cdots - \Delta^{j-1} (1- (j-1))- \Delta^{j-2} (0-(j-1))\\
&=\sum_{m=1}^n (j-m) \Delta^{j+m-3}.
\end{align*}
\end{proof}

\subsection{Proof of \autoref{prop:nnn1}}
\label{proof:prop1}

Let $S^0 \equiv \{a^{[n]}\in A^n: u(a^1)=u(a^2)=\dots =u(a^n)\}\subset A^n$ and $S^C \equiv A^n\setminus S^0$. First, we provide a few lemmas.

\begin{lem}
\label{lem:nn3}
$V \neq  F^*$ if and only if there exist two points $u$, $\tilde{u} \in V$ with $u \neq \tilde{u}$ that satisfy the following two conditions:
\begin{enumerate}
	\item $u$ and $\tilde{u}$ are vertices of an edge of $V$, i.e., there exists $\lambda \in \mathbb{R}^n \setminus \{ \mathbf{0} \}$ such that $\bar{u} \in \argmax_{ u' \in V} \lambda  \cdot u'$ holds if and only if
$\bar{u}  \in \{  t u +(1- t)\tilde{u} : t \in [0,1]\}.$
\item There exist at least two players $i$ such that $u_i \neq \tilde{u}_i$.
\end{enumerate}
	
\end{lem}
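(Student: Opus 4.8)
The plan is to read the right-hand side of the equivalence as a purely geometric condition on the polytope $V$: it asserts precisely that $V$ has a one-dimensional face which is not parallel to any coordinate axis. Indeed, given $u\neq\tilde u$, condition~(1) says that $[u,\tilde u]$ is the face of $V$ exposed by $\lambda$, and condition~(2) says that its direction $u-\tilde u$ is not a multiple of any $e_i$. Conversely, if $V$ has an edge $[u,\tilde u]$ that is not axis-parallel, choosing $\lambda\neq\mathbf{0}$ that exposes this edge (every face of a polytope is the set of maximizers of some linear functional) delivers~(1), and non-axis-parallelism is exactly~(2). Since $V\subseteq F^*$ always holds, the lemma is therefore equivalent to the statement that $V=F^*$ if and only if every edge of $V$ is axis-parallel.

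The heart of the argument is the structural fact: \emph{if every edge of a polytope $P\subseteq\mathbb{R}^n$ is parallel to a coordinate axis, then $P=\prod_i[\min_{x\in P}x_i,\ \max_{x\in P}x_i]$}. I would prove it as follows. Because the $1$-skeleton of $P$ is connected, $\mathrm{aff}(P)=v_0+\mathrm{span}(\text{all edge vectors of }P)$ for any vertex $v_0$, and each edge vector is a scalar multiple of some $e_i$; hence $\mathrm{aff}(P)$ is a translate of a coordinate subspace, and restricting to it lets us assume $P$ is full-dimensional (the discarded coordinates are exactly those with $\min x_i=\max x_i$). At a vertex $v$ of a full-dimensional $P$, the incident edge directions span $\mathbb{R}^n$ and each is parallel to some $e_i$; an edge along $e_i$ together with an edge along $-e_i$ at $v$ would exhibit $v$ as a proper convex combination of two points of $P$, which is impossible, so $v$ has exactly $n$ incident edges, one parallel to each $e_i$. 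Every facet $F$ of $P$ is a polytope whose edges are edges of $P$, so $\mathrm{aff}(F)$ is spanned by $n-1$ of the $e_i$'s, and the outer normal of $F$ is therefore $\pm e_i$ for a single $i$; consequently $P$ is an intersection of halfspaces of the form $\{x_i\le\beta\}$ and $\{x_i\ge\alpha\}$, i.e.\ a box, which must equal $\prod_i[\min x_i,\max x_i]$. Taking the contrapositive handles the nontrivial direction of the lemma: if $V\neq F^*$, some edge $[u,\tilde u]$ of $V$ is not axis-parallel, and we pick $\lambda$ and endpoints $u,\tilde u$ as above.

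For the converse direction, suppose $u\neq\tilde u$ and $\lambda$ satisfy (1) and (2) but $V=F^*$. By~(1), $[u,\tilde u]=\argmax_{u'\in V}\lambda\cdot u'$ is a one-dimensional face of the box $F^*=\prod_i[m_i,M_i]$. But every $1$-face of such a box is $\{x:x_j\in[m_j,M_j],\ x_i=c_i\text{ for }i\neq j\}$ for a single $j$ with $m_j<M_j$, hence is parallel to $e_j$, contradicting~(2). So $V\neq F^*$, completing the equivalence.

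The main obstacle is the structural fact that a polytope whose edges are all axis-parallel must be a box; within it, the delicate points are the reduction to the full-dimensional case and the claim that every facet then has an axis-aligned outer normal. The remaining arguments are standard facts about faces of polytopes.
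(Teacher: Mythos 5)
Your proof is correct and follows essentially the same route as the paper's: both directions reduce the lemma to the geometric statement that $V\neq F^*$ iff $V$ has a non-axis-parallel edge, together with the structural fact that a polytope all of whose edges are axis-parallel must be the box $\prod_i[\min x_i,\max x_i]$. The only difference is that the paper asserts this structural fact in one line (``Since $V$ is convex, this implies that $V$ is an $n$-dimensional cube''), whereas you supply a complete argument for it via the connectivity of the $1$-skeleton, the edge directions at a vertex, and the facet normals.
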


\begin{proof}
	$(\Leftarrow)$ Suppose not, i.e., $V = F^*$. Observe that any $u$ and $\tilde{u}$ on an edge of $F^*$ can have only one player whose payoff differs.
	
	$(\Rightarrow)$ Suppose not. Then, either $V$ does not have an edge or any edge of $V$ does not satisfy Condition 2. In the first case, trivially $V = F^*$. In the second case, there is only one player $i$ such that $u_i \neq \tilde{u}_i$. That is, any edge is parallel to some axis. Since $V$ is convex, this implies that $V$ is an $n$-dimensional cube. In turn, this implies that $V = F^*$, as $V$ contains the best and worst stage payoff profile for each player $i$ (i.e., $u(a)$ for some $a \in \argmax_{a' \in A} u_i (a')$ and $u(a)$ for some $a \in \argmin_{a' \in A} u_i (a')$). A contradiction. 
\end{proof}

\begin{lem}
\label{lem:nn4}
Suppose that there exist $\lambda \in\mathbb{R}^n \setminus \{\mathbf{0}\}$ and two points $u$, $\tilde{u} \in V$ that satisfy Conditions 1 and 2 in \autoref{lem:nn3}. Then, for any $\Delta \in (0,1)$, $W^*_\lambda (\Delta )>\lambda \cdot u$.
\end{lem}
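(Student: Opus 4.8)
The plan is to exhibit, for any fixed $\Delta \in (0,1)$, a single stable path $a^{[n]} \in A^n$ whose $\lambda$-weighted welfare $W_\lambda(a^{[n]}, \Delta)$ strictly exceeds $\lambda \cdot u$; since $W^*_\lambda(\Delta)$ is the maximum over all such paths, this suffices. The idea is to exploit Condition 1 and 2: because $u$ and $\tilde u$ both lie on the face of $V$ maximizing $\lambda\cdot(\cdot)$, we have $\lambda\cdot u = \lambda\cdot\tilde u = \max_{u'\in V}\lambda\cdot u'$, so a \emph{constant} path (playing the payoff vector $u$ in every overlap) gives welfare exactly $\lambda\cdot u$. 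The strict gain must therefore come from intertemporal trade, and the natural candidate is an alternating-type path built from $u$ and $\tilde u$. Concretely, I would pick (realizations via the PRD of) action profiles so that the payoff vector in overlap $k$ is $u$ for some overlaps and $\tilde u$ for the others, chosen so that each player receives her higher of $\{u_i,\tilde u_i\}$ when she is ``young'' (low $y_k(i)$, hence small discount exponent) and her lower value when ``old.'' Since $\lambda\cdot u = \lambda\cdot\tilde u$, in every overlap the \emph{un}discounted weighted sum is the same, but the \emph{discounted} weighted average is pulled up whenever a player with a relatively large weight-times-payoff-gap is shifted to an earlier position in her lifecycle.

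The key steps, in order: (i) Record that Condition 1 gives $\lambda\cdot u=\lambda\cdot\tilde u=:c$ and that the constant stable path yields welfare $c$, so it remains to beat $c$. (ii) Let $P=\{i: u_i>\tilde u_i\}$ and $Q=\{i:\tilde u_i>u_i\}$; by Condition 2 both $P$ and $Q$ are nonempty (their weighted gaps must cancel since $\lambda\cdot(u-\tilde u)=0$). Choose two overlaps, say relabel time so that overlap $1$ and overlap $2$ are adjacent in the lifecycle for everyone; actually, more robustly, consider the stable path in which the payoff vector is $u$ in overlaps $k$ with $k$ in some set $K$ and $\tilde u$ otherwise, and compute $W_\lambda = \frac{1}{\sum_k \Delta^{k-1}}\sum_k \Delta^{k-1} w_k$, where each $w_k$ is a weighted sum that equals $c$ plus a correction term coming from which of $u,\tilde u$ sits at which age for each player. (iii) Show the correction term, as a function of the choice of $K$, can be made strictly positive for $\Delta<1$: the cleanest instance is $K=\{1\}$ versus its complement — then for a player $i\in P$ whose age is $1$ in overlap $1$ she gets her high value with discount weight $\Delta^0=1$ and her low value otherwise, while a player $j\in Q$ of age $1$ in overlap $1$ would get her low value with weight $1$; so one should instead rotate so that exactly the $P$-players are young when $u$ is played and the $Q$-players are young when $\tilde u$ is played. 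Because the OLG structure cycles all ages through all overlaps, for $n\ge 2$ there is always an overlap in which a given player is young; picking the path that plays $u$ in that overlap for each $i\in P$ (and checking consistency — this is where a short combinatorial argument is needed, possibly just taking the two-overlap swap $u^1=u, u^2=\tilde u$ with all others constant) produces a strictly positive gain. (iv) Conclude $W_\lambda(a^{[n]},\Delta) > c = \lambda\cdot u$, hence $W^*_\lambda(\Delta) > \lambda\cdot u$.

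The main obstacle I anticipate is step (iii): making the combinatorial bookkeeping of ``who is young in which overlap'' precise enough to guarantee a \emph{strictly} positive correction for \emph{every} $\Delta \in (0,1)$, rather than just a non-negative one, while keeping the construction valid for arbitrary $n$ and arbitrary (possibly interior, non-extreme) $u,\tilde u\in V$. Two technical points feed into this. First, $u$ and $\tilde u$ may be interior points of the maximizing face and need not be vertices of $V$; but since $V$ is a convex hull of finitely many $u(a)$, any point of $V$ is attainable in expectation via the PRD, so a ``payoff vector equal to $u$ in overlap $k$'' is legitimate — I would note this reduction once at the start. Second, one must ensure the gain does not vanish: the weighted gap $\sum_{i\in P}\lambda_i(u_i-\tilde u_i) = \sum_{i\in Q}\lambda_i(\tilde u_i - u_i) =: \gamma > 0$ is strictly positive by Condition 2 and $u\neq\tilde u$, and the correction term will come out as a positive multiple of $\gamma\,(1-\Delta)$ (or a sum of such terms), which is $>0$ precisely because $\Delta<1$ — this is the quantitative heart of the argument and where the ``$\Delta<1$'' hypothesis is used. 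The remaining estimates are routine geometric-sum manipulations.
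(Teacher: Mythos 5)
Your overall strategy---restrict to stable paths whose per-overlap payoff vector is either $u$ or $\tilde u$, use $\lambda\cdot u=\lambda\cdot\tilde u$ to reduce everything to a correction term, and then sign that correction---is close in spirit to the paper's, but the step you yourself flag as the obstacle, step (iii), is a genuine gap and cannot be closed in the form you describe. Two concrete problems. First, your sets $P=\{i:u_i>\tilde u_i\}$ and $Q=\{i:\tilde u_i>u_i\}$ need not both be nonempty: what cancels in $\lambda\cdot(u-\tilde u)=0$ are the \emph{weighted} gaps $\lambda_i(u_i-\tilde u_i)$, and $\lambda$ is an arbitrary nonzero direction, possibly with negative or zero components. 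Second, and more seriously, the correction for the path that plays $u$ in the overlaps in a set $K$ and $\tilde u$ elsewhere equals $\bigl(\sum_{k=1}^n\Delta^{k-1}\bigr)^{-1}\sum_{j\in K}g_j$ with $g_j=\sum_i\lambda_i(u_i-\tilde u_i)\,\Delta^{(j-i)\bmod n}$ and $\sum_{j=1}^n g_j=0$, so a strict gain requires some $g_j\neq 0$. If $\lambda_i(u_i-\tilde u_i)=0$ for every $i$---which Conditions 1 and 2 alone do not exclude for $n\geq 3$ (take $\lambda=(0,0,1)$, $u-\tilde u=(1,-1,0)$, and a $V$ whose $\lambda$-maximizing face is exactly the segment $[u,\tilde u]$)---then \emph{every} such path yields exactly $\lambda\cdot u$, and no choice of $K$ produces the strictly positive ``multiple of $\gamma(1-\Delta)$'' you assert. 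The quantitative heart of the argument is therefore not routine bookkeeping; it is exactly where the construction can fail.

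The paper extracts strictness by a different mechanism that your sketch does not contain. It normalizes $u=\mathbf 0$, takes the single complementary pair of paths $K=\{j_0\}$ and its complement, whose payoff vectors satisfy $u'+u''=\tilde u$ and hence $\lambda\cdot u'+\lambda\cdot u''=0$, and then argues geometrically that $u'=c(\Delta^{n-1}\tilde u_1,\dots,\tilde u_n)$ cannot lie on the segment $[u,\tilde u]$ when $\Delta<1$ (this is precisely where Condition 2 is used: two distinct discount exponents hit two nonzero coordinates of $\tilde u$, so $u'$ is not a scalar multiple of $\tilde u$). Consequently, if $u'\in V$ then Condition 1 forces $\lambda\cdot u'<0$ and therefore $\lambda\cdot u''>0$. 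In other words, strict positivity comes from applying Condition 1 to a point of $V$ off the maximizing face, not from directly computing and signing the correction term. To salvage your route you would need to (a) work with the weighted gaps, (b) show that some $g_j\neq 0$ (the circulant system $g_j=0$ for all $j$ forces $\lambda_i(u_i-\tilde u_i)=0$ for all $i$ when $\Delta\in(0,1)$), and (c) handle the degenerate case above, which requires bringing in the geometry of $V$ as the paper does rather than any choice of alternating path.
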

\begin{proof}
Without loss of generality, let $u=(0,\dots,0)$. Note that $\lambda \cdot u = \lambda \cdot \tilde{u} =0$. Let $a, \tilde{a} \in A$ be such that $u(a) = u$ and $u(\tilde{a} ) = \tilde{u}$. If $\tilde{a}$ (resp. $a$) is played when player $n$ is the youngest (resp. not the youngest), players' average payoff vector is $u' \equiv c(\Delta ^{n-1}\tilde{u}_1,\Delta ^{n-2} \tilde{u}_2,\dots,\Delta \tilde{u}_{n-1},\tilde{u}_n)$, where $c \equiv \frac{1}{1+\Delta+\dots +\Delta ^{n-1}}$. Instead, if $a$ (resp. $\tilde{a}$) is played when player $n$ is the youngest (resp. not the youngest), the average payoff vector is $u'' \equiv \tilde{u}-c(\Delta ^{n-1}\tilde{u}_1,\Delta ^{n-2}\tilde{u}_2,\dots, \Delta \tilde{u}_{n-1}, \tilde{u}_n)= \tilde{u}-u'$.

%We claim that either $u'$ \textcolor{red}{or} $u''$ \textcolor{red}{is} not in $V$. \textcolor{red}{Suppose} $u' \in V$. 

Observe that for any $\Delta \in (0,1)$, $u'$ is not on the line segment between $u$ and $\tilde{u}$. This is because $\tilde{u}$ has at least two non-zero components due to Condition 2, and these components are multiplied by $\Delta$ different numbers of times. Thus, $\lambda \cdot u' \neq 0$. Since $\lambda \cdot u'' = \lambda \cdot \tilde{u} - \lambda \cdot u'=- \lambda \cdot u'$, either $\lambda \cdot u' >0$ or $\lambda \cdot u'' >0$. Therefore, $W^*_\lambda (\Delta) \geq \max\{ \lambda \cdot u', \lambda \cdot u'' \}>0.$
%\textcolor{red}{Since $u$ and $\tilde{u}$ are on an edge,} this implies that $\lambda \cdot u' <0$. 
%\textcolor{red}{In turn, this implies that} $\lambda \cdot u'' = \lambda \cdot \tilde{u} - \lambda \cdot u'=- \lambda \cdot u' >0$; hence, $u'' \notin V$. 
\end{proof}

\begin{lem}
\label{lem:nnn5}
When $\mathcal{A}^*_\lambda (\Delta )\subset S^C$ for some $\lambda \in \mathbb{R}^n \setminus \{ \mathbf{0} \}$, the following strict inequality holds for some $m\in \{1,\dots,n-1\}$:\\
$$-\sum_{k=1}^m \Delta^{k-1} w_k (u^{[n]})  + \sum_{k=1}^m \Delta^{k-1} w_{n-m+k} (u^{[n]})<0.$$
\end{lem}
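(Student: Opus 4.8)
The plan is to prove Lemma~\ref{lem:nnn5} by contradiction: suppose that for every $m \in \{1,\dots,n-1\}$ the inequality in Lemma~\ref{lem:1} holds with equality, i.e.\ $\sum_{k=1}^m \Delta^{k-1} w_k(u^{[n]}) = \sum_{k=1}^m \Delta^{k-1} w_{n-m+k}(u^{[n]})$ for all $m$, and derive that $\mathcal{A}^*_\lambda(\Delta)$ must contain an element of $S^0$ — in fact that the optimal payoff sequence can be taken constant — contradicting $\mathcal{A}^*_\lambda(\Delta) \subset S^C$. First I would unpack what the full system of equalities (for $m=1,\dots,n-1$) says: taking successive differences of the equality for $m$ and for $m-1$ should give, for each $m$, a relation of the form $\Delta^{m-1} w_m(u^{[n]}) = \Delta^{m-1} w_n(u^{[n]}) + (\text{telescoping correction})$; more carefully, the equality at level $m$ minus the equality at level $m-1$ yields $\Delta^{m-1}w_m = \Delta^{m-1}w_{n}(u^{[n]})$ after re-indexing the shifted sum, so that $w_1(u^{[n]}) = w_2(u^{[n]}) = \dots = w_n(u^{[n]})$. (I would double-check the re-indexing $w_{n-m+k}$ as $k$ ranges over $1,\dots,m$ versus $1,\dots,m-1$; this is the one place where a sign or off-by-one slip is easy.)

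The second step is to argue that $w_1(u^{[n]}) = \dots = w_n(u^{[n]})$ together with optimality forces a constant optimal sequence to exist. Here I would use the characterization~\eqref{eq:nn3}: when all $w_k$ are equal to a common value $w$, we get $W^*_\lambda(\Delta) = w = w_k(u^{[n]})$ for every $k$. Now $w_k(u^{[n]}) = \sum_i \lambda_i u_i^{i+k-1} = \lambda \cdot u^{[\text{shift}]}$ is, up to the cyclic shift of which action each player is currently playing, exactly $\sum_i \lambda_i u_i(a^{i+k-1})$. The key observation is that $\sum_{k=1}^n w_k(u^{[n]}) = \sum_{k=1}^n \sum_i \lambda_i u_i(a^k) = \sum_{k=1}^n \lambda\cdot u(a^k)$, so the common value $w$ equals $\frac1n \sum_{k=1}^n \lambda\cdot u(a^k)$, which is an average of the $n$ numbers $\lambda\cdot u(a^1),\dots,\lambda\cdot u(a^n)$. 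Hence $w \le \max_k \lambda\cdot u(a^k) \le \max_{a'\in A}\lambda\cdot u(a')$. But playing the constant sequence $(a^*,\dots,a^*)$ with $a^* \in \argmax_{a'\in A}\lambda\cdot u(a')$ achieves $W_\lambda = \lambda\cdot u(a^*) = \max_{a'\in A}\lambda\cdot u(a') \ge w = W^*_\lambda(\Delta)$, so this constant sequence is itself optimal; that is, $\mathcal{A}^*_\lambda(\Delta)\cap S^0 \neq \emptyset$, contradicting $\mathcal{A}^*_\lambda(\Delta)\subset S^C$. This gives the claimed strict inequality for at least one $m$.

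I would also handle the boundary/degenerate points carefully: the lemma statement implicitly assumes $\Delta \in (0,1)$ (the case $\Delta = 1$ makes $w_1=\dots=w_n$ automatically reduce the $v_i$ to the ordinary average and is consistent with $F(\Delta)=V$), and when some of the $w_k$ coincide I should make sure the ``averaging'' argument does not secretly require them to be distinct — it does not. A cleaner alternative for the second step, which I might use instead, is: from $w_1=\dots=w_n$ and \eqref{eq:nn3}, every cyclic rotation of $a^{[n]}$ gives the same welfare $W^*_\lambda(\Delta)$, and so does every ``mixture'' realized by the PRD construction of Theorem~\ref{them:1}; in particular the convex combination that is the constant sequence averaging over the rotations has welfare $\ge$ the max of the individual $\lambda\cdot u(a^k)$ only if all are equal, and if they are all equal then each constant sequence $(a^k,\dots,a^k)$ is already optimal and lies in $S^0$. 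Either way the contradiction with $\mathcal{A}^*_\lambda(\Delta)\subset S^C$ is reached.

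The main obstacle I anticipate is the bookkeeping in the first step — correctly extracting $w_1=\dots=w_n$ from the system of equality versions of \eqref{eq:2}, since the shifted sums $\sum_{k=1}^m \Delta^{k-1}w_{n-m+k}$ have both their length and their starting index depending on $m$, so taking consecutive differences does not telescope as transparently as one would like; I would likely need to write out the $m$ and $m-1$ equalities, align terms by the index of $w$, and verify the cancellation explicitly (possibly checking $n=2,3$ by hand first). Once $w_1=\dots=w_n$ is in hand, the rest is short.
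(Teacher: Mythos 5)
Your overall strategy coincides with the paper's: assume all the inequalities of Lemma \ref{lem:1} hold with equality, deduce $w_1(u^{[n]})=\dots=w_n(u^{[n]})$, and conclude that some constant sequence is also optimal, contradicting $\mathcal{A}^*_\lambda(\Delta)\subset S^C$. Your second step is correct and in fact more complete than the paper's: the paper merely asserts that $(u^1,\dots,u^1)$ attains the same score, whereas your averaging argument --- $w=\frac{1}{n}\sum_k\lambda\cdot u(a^k)\le\max_k\lambda\cdot u(a^k)\le\max_{a'\in A}\lambda\cdot u(a')$, which is the score of a constant sequence and hence must equal $W^*_\lambda(\Delta)$ --- is the justification actually needed there. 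Your remark that $\Delta\neq 1$ is required is also on point; the paper's argument uses it silently.

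The gap is in your first step. The difference of the level-$m$ and level-$(m-1)$ equalities does not telescope to $\Delta^{m-1}w_m=\Delta^{m-1}w_n$: writing $L_m=\sum_{k=1}^m\Delta^{k-1}w_k$ and $R_m=\sum_{k=1}^m\Delta^{k-1}w_{n-m+k}$, one has $R_m=w_{n-m+1}+\Delta R_{m-1}$ while $L_m=L_{m-1}+\Delta^{m-1}w_m$, so subtracting the two equalities gives
\[
\Delta^{m-1}w_m-w_{n-m+1}=(\Delta-1)L_{m-1},
\]
which for $m\ge 2$ still couples $w_{n-m+1}$ to all of $w_1,\dots,w_{m}$ rather than isolating a single identity $w_m=w_{n}$. (For $n=4$, eliminating variables by hand already requires the non-obvious cancellation $1+\Delta-\Delta(1-\Delta)=1+\Delta^2$; nothing collapses ``for free.'') The conclusion $w_1=\dots=w_n$ is true, but establishing it amounts to showing that the $(n-1)\times(n-1)$ coefficient matrix of the system (with $w_n$ treated as a parameter) is nonsingular for $\Delta\in(0,1)$; this linear-independence computation is the bulk of the paper's proof of the lemma and is precisely the piece your sketch leaves unestablished.
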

\begin{proof}
We use proof by contradiction. Suppose that the above inequality does not hold for any $m$, i.e., for any $m\in \{1,\dots,n-1\}$,
\begin{equation}
\label{eq:nn11}
\sum_{k=1}^m \Delta^{k-1} w_k (u^{[n]})  - \sum_{k=1}^m \Delta^{k-1} w_{n-m+k} (u^{[n]})=0.
\end{equation}
We claim that the above equalities imply $w_1(u^{[n]})=w_2(u^{[n]})=\cdots =w_n(u^{[n]})$. To show this, let us regard $w_n (u^{[n]})$ as a parameter and the rest $w_1( u^{[n]}), \dots, w_{n-1} (u^{[n]} )$ as unknowns. Clearly, $w_1 (u^{[n]}) = w_n (u^{[n]}), \cdots, w_{n-1} (u^{[n]}) = w_n (u^{[n]})$ is a solution of the system. Next, we show that it is a unique solution. Consider the submatrix consisting of columns 1 to $n-1$ of the coefficient matrix (i.e., the $(n-1)\times n$ matrix in which the $m$-th row corresponds to the coefficients in equation \eqref{eq:nn11} for $m$), which is an $(n-1) \times (n-1)$ matrix. Our purpose is to show that this has the full rank. Let $r_k$ be the $k$-th row of the submatrix, $k=1, \dots, n-1$. To show that this matrix is invertible, we show that $r_1, \dots, r_{n-1}$ are linearly independent or, equivalently, $(\alpha_k)_k \in \mathbb{R}^{n-1}$ satisfies $\sum_{k=1}^{n-1} \alpha_k r_k = \mathbf{0}$ if and only if $\alpha_k = 0$ for all $k$. 
By rearranging each column $l = 1, \dots, n-1$ of $\sum_{k=1}^{n-1} \alpha_k r_k =\mathbf{0}$, we have:
\begin{align*}
(\alpha_1 + \cdots + \alpha_{n-1})  &=0,\\
\Delta (\alpha_2 + \cdots + \alpha_{n-1}) &=\alpha_{n-1},\\
\Delta^2 (\alpha_3 + \cdots + \alpha_{n-1}) &= \alpha_{n-2} + \Delta \alpha_{n-1},\\
\vdots\\
\Delta^{n-2} \alpha_{n-1} &= \alpha_2 + \cdots + \Delta^{n-3}  \alpha_{n-1}.
\end{align*}
Using $\alpha_1 + \alpha_2 + \cdots + \alpha_{n-1} =0$,
\begin{align*}
- \alpha_1 \Delta &= \alpha_{n-1},\\
- (\alpha_1 + \alpha_2) \Delta^2 &= \alpha_{n-2} + \alpha_{n-1} \Delta,\\
\vdots\\
- (\alpha_1 +  \cdots + \alpha_{n-2})\Delta^{n-2} &= \alpha_2 + \alpha_3 \Delta   + \cdots +  \alpha_{n-1}\Delta^{n-3}.
\end{align*}
By substituting each equation into the next, we have:
\begin{align*}
- \alpha_1 \Delta &= \alpha_{n-1},\\
- \alpha_2 \Delta^2 &= \alpha_{n-2},\\
\vdots\\
- \alpha_{n-2} \Delta^{n-2} &= \alpha_2.	
\end{align*}
Then, by comparing the equations with the same variables (for instance, the second with the last), we obtain $\alpha_2 =\alpha_3 = \cdots = \alpha_{n-2}=0$, and using the first equation and $\alpha_1 + \cdots + \alpha_{n-1}=0$, we have $\alpha_1 = \alpha_{n-1}=0$. This concludes the proof of the claim.

By the definition of each $w_k(u^{[n]})$ and the above claim, playing $(u^1,\cdots, u^1)$ yields the same $\lambda$-weighted score as playing $u^{[n]}$. Therefore, $(u^1,\cdots, u^1)\in\mathcal{U}^* _{\lambda}(\Delta )$ must hold, and there exists an action profile $a\in A$ that satisfies $u^1=u(a)$ and $(a,\cdots ,a)\in\mathcal{A}^*_\lambda (\Delta )$. This contradicts $\mathcal{A}^*_\lambda (\Delta )\subset S^C$, completing the proof.
\end{proof}

\begin{proof}[Proof of \autoref{prop:nnn1}] 
Observe that \autoref{lem:nn3} and \autoref{lem:nn4} imply that $V\neq F^*$ if and only if for any $\delta  \in (0,1)$ and $T\in \mathbb{N}$, $F(\delta,T )\neq V$.

Suppose $\delta  \in (0, 1)$, $T\in \mathbb{N}$, and $F^* \neq V$ so that $F(\delta,T )\neq V$. Then, clearly for some $\lambda \neq 0$, $\mathcal{A}^*_\lambda (\Delta )\subset S^C$, where $\Delta=\delta^T$.
Then, by \autoref{lem:nnn5}, the inequality \eqref{eq:2} strictly holds for at least one particular direction $\lambda \in \mathbb{R}^n \setminus \{ \mathbf{0} \}$, and this concludes the proof.
\end{proof}

\subsection{Proof of \autoref{them:4}}

\begin{proof}
Consider an OLG repeated game $OLG (G, \delta, T, \mathbf{M})$. Consider an auxiliary OLG repeated game defined as follows. For each $i \in N$, we divide $i$'th overlap (whose length is $M_i T$) into smaller ``dummy overlaps'' $(i,l_i)$, $l_i = 1, \dots, M_i$, with the same length of $T$. Then, there are $\sum_{i=1}^n M_i$ dummy overlaps. In addition, for some dummy overlap $(i,l_i)$, if no player is born, then we introduce a ``dummy player'' $(i,l_i)$ who is born at the beginning of the dummy overlap $(i,l_i)$. Assume that this dummy player has a single available action, and the payoff function assigns $0$ to any action profile. Let $\tilde{G}$ be the stage game with such dummy players, where the original players' available actions and payoffs are the same as those of $G$. Since in every dummy overlap, either an original or a dummy player is born, this OLG repeated game with the dummy overlaps and players can be regarded as the OLG repeated game $OLG (\tilde{G}, \delta, T)$ in \Cref{sec:2}. Therefore, we can apply our previous results to this auxiliary game. Thus, for each direction $\lambda$, the $\lambda$-weighted welfare is monotonically increasing as $\delta$ decreases and $T$ increases by \autoref{them:2}. Lastly, we observe that the welfare is the same as the one without the dummy players, as their payoffs are assumed to be 0. 
\end{proof}

\bibliography{FeasibleOLG_bib}
\bibliographystyle{chicago}

\end{document}